\documentclass{article}

\usepackage[main, final]{neurips_2026}

\usepackage[utf8]{inputenc} 
\usepackage[T1]{fontenc}    
\usepackage{hyperref}       
\hypersetup{colorlinks=true, linkcolor=red, citecolor=blue, urlcolor=magenta}
\usepackage{url}            
\usepackage{booktabs}       
\usepackage{amsfonts}       
\usepackage{nicefrac}       
\usepackage{microtype}      
\usepackage{xcolor}         

\usepackage{amsmath}
\usepackage{amssymb}
\usepackage{mathtools}
\usepackage{amsthm}
\usepackage{multirow}
\usepackage{enumitem}

\usepackage[capitalize,noabbrev]{cleveref}

\RequirePackage{algorithm}
\RequirePackage{algorithmic}

\theoremstyle{plain}
\newtheorem{theorem}{Theorem}[section]

\theoremstyle{definition}
\newtheorem{definition}[theorem]{Definition}

\theoremstyle{remark}

\usepackage{xcolor}

\newcommand{\stdsub}[1]{_{\scriptsize\textcolor{black!70}{\pm#1}}}

\newcommand{\best}[1]{%
  \ensuremath{\colorbox{orange!20}{\textbf{\textcolor{black}{#1}}}}%
}

\newcommand{\second}[1]{%
  \ensuremath{\colorbox{blue!15}{\textbf{\textcolor{black}{#1}}}}%
}

\usepackage{colortbl}
\definecolor{rowgray}{gray}{0.98}

\usepackage{xstring}

\newcommand{\AccMean}[1]{%
  \StrLeft{#1}{1}[\first]%
  \StrGobbleLeft{#1}{1}[\rest]%
  \textbf{\first}\rest%
}

\newcommand{\LatMean}[1]{%
  \StrLeft{#1}{1}[\first]%
  \StrGobbleLeft{#1}{1}[\rest]%
  \textbf{\first}\rest%
}

\title{Optimize Once, Execute Fast:\\ Latency-Aware Multi-Agent Workflow Learning for Recurrent Queries}

\author{%
  Enpei Zhang$^{1}$ \quad
  Feiyu Qu$^{1}$ \quad
  Zheng Huang$^{1}$ \quad
  Dawei Zhou$^{2}$ \quad
  Elynn Chen$^{3}$ \quad
  Yujun Yan$^{1}$ \\
  $^{1}$Dartmouth College \quad
  $^{2}$Virginia Tech \quad
  $^{3}$New York University \\
  \texttt{\{enpei.zhang.gr, feiyu.qu.gr, zheng.huang.gr, yujun.yan\}@dartmouth.edu} \\
  \texttt{zhoud@vt.edu} \quad
  \texttt{elynn.chen@stern.nyu.edu} \\
}

\newcommand\blfootnote[1]{%
  \begingroup
  \renewcommand\thefootnote{}\footnote{#1}%
  \addtocounter{footnote}{-1}%
  \endgroup
}

\begin{document}

\maketitle

\begin{abstract}
Large Language Model-based Multi-Agent Systems (LLM-based MAS) have shown remarkable success in solving complex tasks by coordinating specialized agents through multi-step workflows. To overcome the high cost and specificity of manually designed workflows, recent research has shifted toward learning them automatically. However, a key limitation remains: while agent calls introduce significant latency due to model inference time and service congestion, most automated methods optimize solely for downstream performance, ignoring execution time (makespan). This issue is amplified when workflows are reused across recurring queries, making latency accumulate over time. This limits deployment in recurrent applications such as AI customer support and AI cloud troubleshooting, where users expect fast responses. To bridge this gap, we introduce a new problem, termed latency-aware workflow generation, where the objective is to minimize workflow makespan while aiming to retain downstream performance. We propose \textbf{LAWA} (\textbf{L}atency-\textbf{A}ware \textbf{W}orkflow Optimiz\textbf{A}tion), a novel framework that reduces makespan via theoretically grounded graph edits targeting critical-path bottlenecks. By unifying a graph generator with this graph-editing procedure, LAWA supports both de novo workflow generation and refinement of existing ones. Empirical results across seven benchmarks demonstrate that LAWA significantly reduces makespan while achieving competitive task performance, supporting efficient reuse of multi-agent workflows for recurrent queries.
\blfootnote{Correspondence to \texttt{enpei.zhang.gr@dartmouth.edu}.}
\blfootnote{Our codes are available at \url{https://github.com/GraphmindDartmouth/LAWA}.}
\end{abstract}

\section{Introduction}

Large Language Model-based Multi-Agent Systems (LLM-based MAS) have shown strong capabilities in complex tasks such as code generation and reasoning by coordinating specialized agents through structured, multi-step workflows~\cite{autogpt,metagpt}. Meanwhile, manually designing such workflows requires substantial effort and domain expertise, motivating recent research on automated workflow generation~\cite{aflow,assemble_your_crew}, which allows MAS to automatically customize its workflow for different tasks. 

However, most prior workflow generation methods still optimize workflows primarily for downstream task performance, while largely overlooking execution time. This limitation is increasingly problematic in LLM-based MAS, where each agent call can incur considerable latency due to model inference and service congestion~\cite{kairos_llm_serving}, and the overall execution time (makespan) of the workflow can grow rapidly with increasing workflow depth~\cite{seqopt}. For workflows repeatedly invoked for recurrent queries, small per-query delays can accumulate into substantial latency costs. Such workloads arise in AI customer support, automated software engineering, and cloud troubleshooting~\cite{wang2025ecombench,metagpt,chen2025aiopslab}, where similar requests recur and users expect fast responses. Therefore, focusing solely on task performance may lead to workflows that are impractical for real-world deployment. To bridge this gap, we formally propose a new problem: latency-aware workflow generation, where the objective is to minimize makespan while preserving task performance.

To solve this problem, we propose \textbf{LAWA} (\textbf{L}atency-\textbf{A}ware \textbf{W}orkflow Optimiz\textbf{A}tion), a theoretically grounded framework for latency-aware workflow optimization, designed to reduce makespan with monotonicity guarantees while preserving downstream task performance. LAWA is built on three core components: First, we devise a workflow optimization engine that performs makespan-guided beam search over graph edits with theoretical guarantees of monotonic makespan reduction. Second, we train a Graph Evaluator that estimates downstream performance of candidate workflows and guides edit selection toward edits that maintain downstream task performance. Third, we train a Graph Generator that constructs initial chain workflows when no initial workflow is given, enabling LAWA to generate new workflows from scratch. Empirical results across seven benchmarks confirm that LAWA significantly reduces makespan while maintaining competitive task performance.

Our contributions are summarized as follows:
\begin{itemize}[leftmargin=*]
\item \textbf{New Problem:} We define latency-aware workflow generation, where makespan is treated as a primary optimization objective alongside task performance in time-sensitive multi-agent systems.
\item \textbf{New Theoretically Grounded Framework:} We propose LAWA, a framework that applies theoretically grounded graph editing operations, providing formal guarantees on makespan reduction.
\item \textbf{Empirical Validation:} We conduct comprehensive experiments across seven benchmarks, demonstrating that LAWA achieves significant makespan reduction compared to state-of-the-art baselines while maintaining competitive task accuracy.
\end{itemize}

\section{Problem Formulation}
\label{sec:problem}
In this section, we formalize the problem of latency-aware workflow optimization. We first introduce the notation for multi-agent workflows and the definition of workflow makespan, and then present our optimization objective.

\paragraph{Notations.}
We model a multi-agent workflow as a directed acyclic graph (DAG), $G = (\mathcal{V}, \mathbf{A}, \mathbf{t})$, where $\mathcal{V}$ is the node set, $\mathbf{A}$ is the adjacency matrix, and $\mathbf{t}$ is the vector of node latencies. Each node $v_k \in \mathcal{V}$ represents a single agent call with execution time $t_k > 0$. Different nodes may represent repeated calls to the same agent at different steps. The binary adjacency matrix $\mathbf{A}\in\{0,1\}^{|\mathcal{V}|\times|\mathcal{V}|}$ encodes directed information dependencies among agent calls: $\mathbf{A}_{ij}=1$ indicates that the output of node $v_i$ is provided as an input context to node $v_j$. We also introduce two special nodes, $v_I$ and $v_O$, denoting workflow entry and exit. The entry node $v_I$ has no incoming edges, the exit node $v_O$ has no outgoing edges, and every other node lies on at least one path from $v_I$ to $v_O$.

\paragraph{Makespan Definition.} Under full parallelism, each node starts immediately after all its predecessors have completed, and the makespan $T(G)$ (i.e., end-to-end runtime of the workflow) is determined by the critical path (i.e., the path from $v_I$ to $v_O$ with the maximum cumulative latency). Formally,
\begin{align}
T(G) = \max\limits_{P \in \mathcal{P}_G(v_I, v_O)} \sum_{v_k \in P} t_k,
\label{eq:def_T}
\end{align}
where $\mathcal{P}_G(v_I, v_O)$ denotes the set of all paths from the entry node $v_I$ to the exit node $v_O$ in $G$. 
In the main text, we only consider the above definition; an additional experiment with a limited number of executors is reported in Appendix~\ref{appsec:finite_resource}.

\paragraph{Optimization Goal.}
We define the optimization objective for latency-aware workflows as:
\begin{align}
\min_G~T(G), \quad
\text{s.t.}~~S(G) \ge S(G_0),
\label{eq:overall_goal}
\end{align}
where $S(G)$ denotes the downstream task performance of workflow $G$, and $G_0$ is the initial graph before modifications. The objective in Equation~\eqref{eq:overall_goal} minimizes the makespan $T(G)$, while the constraint enforces that the performance of the optimized workflow is no worse than that of the initial graph. In practice, the initial graph can be an existing workflow that meets performance requirements but is latency-inefficient, or a generated initial workflow. 

\begin{figure*}
    \centering
    \includegraphics[width=\linewidth]{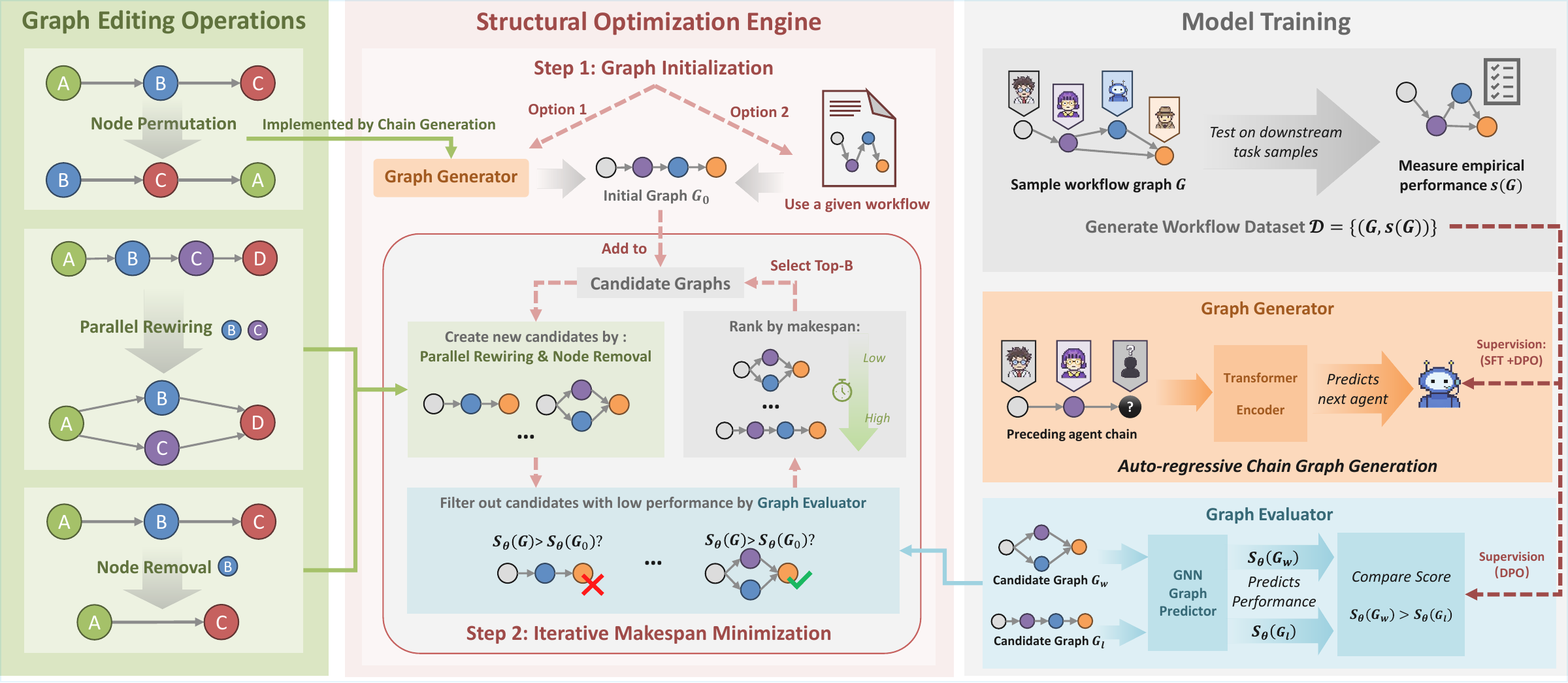}
    \caption{\textbf{Overview of LAWA.} Three main components of LAWA: (left) graph editing operations including node permutation, parallel rewiring, and node removal; (middle) a structural optimization engine that initializes a workflow, generates and ranks candidate graphs by makespan, and iteratively selects valid modifications; and (right) the training pipelines for the graph generator and graph evaluator based on empirical downstream task performance.}
    \label{fig:overview}
\end{figure*}

\section{Method}

\subsection{Overview}
In this section, we introduce \textbf{LAWA}, as shown in~\cref{fig:overview}, a latency-aware workflow generation framework that constructs and optimizes multi-agent workflows through iterative graph modifications. At its core, LAWA defines a set of \textbf{theoretically grounded graph modification operations} that characterize admissible structural changes for reducing makespan under workflow execution constraints. Building on this formulation, LAWA consists of a structural optimization engine and two learnable modules.
The structural optimization engine performs a makespan-guided search by iteratively applying valid graph modification operations to reduce workflow makespan.
When no initial graph is provided, LAWA includes a \textbf{Graph Generator}, which constructs initial workflows and enables LAWA to operate in a de novo generation setting.
To maintain downstream task performance during structural optimization, LAWA trains a \textbf{Graph Evaluator}, which estimates downstream task performance of candidate workflows and guides operation selection. In the following, we first introduce the graph editing operations and their theoretical properties, then present the detailed iterative structural optimization procedure, followed by the graph generator for workflow initialization and the graph evaluator used in the structural optimization. 
\subsection{Basic Graph Editing Operations}
\label{sec:ops}

We first define a set of graph editing operations that increase parallelism, remove redundant agents, and reorder agent execution, respectively.

\begin{definition}[Parallel Rewiring]
Given two nodes $i$ and $j$ with $\mathbf{A}_{ij}>0$, the \emph{parallel rewiring} operation removes the dependency from $i$ to $j$ and merges their outgoing and incoming edges to increase potential parallelism. Formally, we define the operator $\text{ParallelRewire}(G,i,j)$ as:
\begin{align}
\mathbf{A}_{ij} \leftarrow 0\;;\quad
\mathbf{A}_{ik} \leftarrow1\;\text{for all $k$ that $\mathbf{A}_{jk}=1$}\;;\quad
\mathbf{A}_{kj} \leftarrow1\;\text{for all $k$ that $\mathbf{A}_{ki}=1$}
\end{align}
\end{definition}
\begin{definition}[Node Removal]
The \emph{node removal} operation eliminates a non-terminal node $k\in\mathcal{V}\setminus\{v_I,v_O\}$ from the workflow while preserving connectivity by linking its predecessors and successors. Formally, we define the operator $\text{RemoveNode}(G,k)$ as:
\begin{align}
\mathbf{A}_{ij} &\leftarrow 1 \quad \text{for all $(i,j)$ that $\mathbf{A}_{ik}=1$ and $\mathbf{A}_{kj}=1$}\;;\\
\mathbf{A}_{ik} &\leftarrow 0 \quad \text{for all $i$ that $\mathbf{A}_{ik}=1$}\;;\\
\mathbf{A}_{kj} &\leftarrow 0 \quad \text{for all $j$ that $\mathbf{A}_{kj}=1$}\;.
\end{align}
\end{definition}
\begin{definition}[Node Permutation]
Let $\sigma:\mathcal{V}\to\mathcal{V}$ be a permutation over the node set.
We define the induced permutation operator $\Pi_\sigma$ acting on adjacency matrix $\mathbf{A}$ as
\begin{align}
\big(\Pi_\sigma(\mathbf{A})\big)_{\sigma(u),\,\sigma(v)} \;=\; \mathbf{A}_{u,v}
\quad \forall\, u,v \in \mathcal{V},
\end{align}
which reorders nodes in the workflow.
\end{definition}
The proposed graph operations are designed to satisfy three requirements. 
First, the operations preserve workflow connectivity. 
Second, the operations monotonically reduce workflow makespan. 
Finally, under a suitable node ordering, the operator set is expressive enough to transform a chain into any transitively reduced DAG considered in our formulation.
The following theorems establish these guarantees.
\begin{theorem}[Reachability]
\label{prop:reachability}
Let $u \leadsto_G v$ denote that there exists a directed path from $u$ to $v$ in $G$. Let $G'$ denote the graph obtained from $G$ after applying one of the operations:
\noindent\textnormal{(i) Node Removal.}
Let $k \in \mathcal{V}\setminus\{v_I,v_O\}$ be a removed node. Then,
\begin{equation}
\forall u, v \in \mathcal{V} \setminus \{k\}, \quad
u \leadsto_{G} v \;\iff\; u \leadsto_{G'} v .
\end{equation}
\noindent\textnormal{(ii) Parallel Rewiring.}
Let $(i,j)$ be the rewired edge. Then,
\begin{equation}
\forall u, v \in \mathcal{V} \setminus \{i,j\}, \quad
u \leadsto_{G} v \;\iff\; u \leadsto_{G'} v .
\end{equation}
\end{theorem}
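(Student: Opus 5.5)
Both parts reduce to path surgery: for each direction of the equivalence, I take a witnessing path in one graph and transform it into a path in the other, using only edges that the operation adds or that it leaves untouched. Throughout I use that $G$ is a DAG, so every path is simple and visits each node at most once.

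\textbf{Node Removal.} Fix $u,v\neq k$.

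For ($\Rightarrow$), take a path $P$ from $u$ to $v$ in $G$. If $k\notin P$, every edge of $P$ survives in $G'$, because RemoveNode deletes only edges incident to $k$. If $k\in P$, then since $u,v\neq k$ the node $k$ is interior to $P$, so $P$ contains a segment $a\to k\to b$ with $\mathbf{A}_{ak}=\mathbf{A}_{kb}=1$. The first update inserts the edge $a\to b$, and I splice it in place of the segment. By simplicity $k$ occurs only once, so the result is a path in $G'$.

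For ($\Leftarrow$), take a path in $G'$; it avoids $k$, since $k$ is isolated in $G'$. Each edge of this path is either an original edge of $G$, or a new edge $a\to b$ that I expand back to $a\to k\to b$. The result is a walk in $G$ from $u$ to $v$, which suffices to witness reachability.

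\textbf{Parallel Rewiring.} Write $E'=(E\setminus\{(i,j)\})\cup\{(i,k):(j,k)\in E\}\cup\{(k,j):(k,i)\in E\}$ and fix $u,v\notin\{i,j\}$.

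For ($\Leftarrow$), each new edge is realized by a path in $G$: $i\to k$ by $i\to j\to k$, and $k\to j$ by $k\to i\to j$. Expanding every new edge of a $G'$-path this way gives a $G$-walk from $u$ to $v$.

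For ($\Rightarrow$), take a $G$-path from $u$ to $v$. If it avoids the edge $(i,j)$, it is already in $G'$. Otherwise it has the form $\cdots\to a\to i\to j\to b\to\cdots$. Since $u,v\notin\{i,j\}$, the edge $(i,j)$ is interior, so both a predecessor $a$ of $i$ and a successor $b$ of $j$ exist on the path. I replace the segment $a\to i\to j\to b$ by $a\to i\to b$, which is valid because $(i,b)$ is a new edge since $(j,b)\in E$. (Alternatively, $a\to j\to b$ would also work, using the new edge $(a,j)$.) The path uses $(i,j)$ only once, so this single substitution yields a $G'$-path.

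\textbf{Main obstacle.} The delicate point is justifying that in the forward direction the removed node or removed edge is strictly interior to the path, so that $a$ and $b$ exist. This is exactly why the statement excludes $k$, respectively $i$ and $j$, as endpoints. Indeed, $i\leadsto j$ may fail in $G'$, so the restriction is necessary.

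A second subtlety is that $G'$ must remain acyclic for later results, but reachability itself does not need this: walks suffice in the backward directions. One could add that the cycle-freeness of $G'$ follows from ($\Leftarrow$) applied to all pairs, but that is beyond the present statement.
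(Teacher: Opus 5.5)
Your proof is correct and follows the same path-surgery argument as the paper. In the forward direction you splice in the shortcut or merged edge to bypass $k$ or $(i,j)$, and in the reverse direction you expand each newly added edge back into its two-edge segment in $G$. The paper bypasses $(i,j)$ via the new edge $(x,j)$ where you primarily use $(i,b)$, and it leaves implicit your justification that $k$ or $(i,j)$ must be interior to the path.
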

Proof is presented in Appendix~\ref{appsec:prof_reachability}.

\begin{theorem}[Non-increasing Makespan]
Let $G=(\mathcal{V}, \mathbf{A}, \mathbf{t})$ be a workflow DAG. Let $G'$ denote the graph obtained from $G$ by applying either the Node Removal operation to a node $k \in \mathcal{V}\setminus\{v_I,v_O\}$ or the Parallel Rewiring operation to an existing edge $(i, j)$. Then $T(G') \le T(G).$
\label{prop:Non-increasingMakespan}
\end{theorem}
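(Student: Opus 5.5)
Since $T$ is the maximum path weight from $v_I$ to $v_O$, I would show that every path $P'$ in $G'$ from $v_I$ to $v_O$ can be mapped to a path $P$ in $G$ from $v_I$ to $v_O$ with $\sum_{v\in P'}t_v \le \sum_{v\in P}t_v$. Taking the maximum over $P'$ then gives $T(G')\le T(G)$. Because all latencies are positive, it is enough to produce a path $P$ in $G$ whose node set contains the node set of $P'$. The natural way to do this is edge by edge: replace each edge of $P'$ that is absent from $G$ by a short path in $G$ through one extra node.

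\textbf{Node Removal.} The edges of $G'$ are the edges of $G$ not incident to $k$, plus new edges $(i,j)$ with $\mathbf{A}_{ik}=\mathbf{A}_{kj}=1$. Given $P'$, I replace each new edge $(i,j)$ by $i\to k\to j$ in $G$. Since $P'$ is a path in a graph not containing $k$, the resulting walk $P$ visits $k$ at least once. If it visits $k$ more than once, $G$ would contain a cycle through $k$, contradicting acyclicity. So $k$ appears at most once and $P$ is a path in $G$. Its weight is $\sum_{P'}t + t_k\cdot[\text{used}] \ge \sum_{P'} t$. Acyclicity of $G'$ is not needed here; we only need that $P'$ is a path in $G'$.

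\textbf{Parallel Rewiring.} $G'$ is $G$ with $(i,j)$ removed and with new edges $(i,k)$ for every successor $k$ of $j$, and $(k,j)$ for every predecessor $k$ of $i$. Given $P'$, I replace each new edge $(i,k)$ by $i\to j\to k$ and each new edge $(k,j)$ by $k\to i\to j$; all of these edges exist in $G$. The main obstacle is that the resulting walk may repeat $i$ or $j$. For example, $P'$ might use $(k,j)$ and later $(i,k')$, giving $\dots k\to i\to j \dots i\to j\to k'\dots$. This would create a cycle $j\leadsto i$ in $G$, which is impossible because $(i,j)\in G$ and $G$ is a DAG. The same argument applies to any repeat of $i$ or $j$. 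Any repetition of a vertex in the expanded walk yields a closed walk in $G$, hence a cycle, a contradiction. So the walk is a simple path in $G$ from $v_I$ to $v_O$. It contains every node of $P'$, plus possibly $i$ and/or $j$, and since $t>0$ its weight is at least that of $P'$.

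Combining the two cases gives the theorem. The key step to write carefully is the "walk in a DAG is a path" lemma: any walk in a DAG has no repeated vertices. This lets me avoid case analysis on how many rewired edges $P'$ uses. I also note that the endpoints $v_I,v_O$ are unchanged because the operations never touch the entry or exit node in a way that alters the path endpoints.
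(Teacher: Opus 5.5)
Your proof is correct and follows the same route as the paper's: map each $v_I$--$v_O$ path in $G'$ to one in $G$ by expanding every newly created edge into its two-edge segment through $k$ (resp.\ through $i$ or $j$), use positivity of the latencies, and take the maximum. Your lemma that any walk in the DAG $G$ is automatically simple supplies the justification the paper omits when it asserts the expansion ``yields a valid path''. One small wording slip: in the node-removal case the expanded walk visits $k$ once per shortcut used, possibly zero times, not ``at least once''; your indicator term $t_k\cdot[\text{used}]$ already accounts for this.
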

Proof is presented in Appendix~\ref{appsec:prof_Non-increasingMakespan}. 

\begin{theorem}[Completeness]
Let $G_{\text{chain}}$ be an arbitrary chain DAG over all nodes in $\mathcal{V}$ with adjacency matrix $\mathbf{A}_{\text{chain}}$.
For any target DAG $G^\ast=(\mathcal{V}, \mathbf{A}^\ast)$ that is transitively reduced, there exist a node permutation $\sigma$ and a finite operation sequence $\mathcal{O}$ consisting only of Parallel Rewiring and Node Removal such that
$
\mathcal{O}\!\left(\Pi_\sigma(\mathbf{A}_{\text{chain}})\right) \;=\; \mathbf{A}^\ast .
$
\label{prop:completeness}
\end{theorem}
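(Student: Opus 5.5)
The plan is to argue through the reachability (partial) order instead of edge by edge. The key fact is that a transitively reduced DAG is uniquely determined by its reachability relation $R^\ast=\{(u,v): u\leadsto_{G^\ast}v\}$. Since $G^\ast$ is a DAG, $R^\ast$ has a linear extension, i.e.\ a topological order of $G^\ast$. I take $\sigma$ to map the chain onto such a topological order. Then $\Pi_\sigma(\mathbf{A}_{\text{chain}})$ is a chain whose reachability relation $R_0$ is a total order containing $R^\ast$. Since $G^\ast$ lives on the same node set $\mathcal{V}$, I expect Node Removal not to be needed, so $\mathcal{O}$ will consist only of Parallel Rewiring steps.

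\textbf{Step 1: effect of one rewiring on reachability.} I will show that rewiring an edge $(i,j)$ that is the \emph{only} path from $i$ to $j$ removes exactly the pair $(i,j)$ from the reachability relation.
\begin{itemize}
\item By Theorem~\ref{prop:reachability}(ii), pairs not involving $i,j$ are unchanged.
\item For pairs involving $i$ or $j$: the added edges $i\to k$ (for $\mathbf{A}_{jk}=1$) and $k\to j$ (for $\mathbf{A}_{ki}=1$) preserve every relation $i\leadsto w$ with $w\neq j$ and every relation $w\leadsto j$ with $w\neq i$.
\item No new relation appears, because every new edge is implied by an old path.
\item The relation $i\leadsto j$ disappears, because the edge $(i,j)$ was its only path.
\end{itemize}

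\textbf{Step 2: a removable cover always exists.} Let $R\supseteq R^\ast$ be the current reachability relation with $R\neq R^\ast$. Pick $(a,b)\in R\setminus R^\ast$ and write a maximal chain of covering pairs $a=c_0\prec c_1\prec\dots\prec c_m=b$ in $R$. If every cover $(c_{l},c_{l+1})$ were in $R^\ast$, transitivity of $R^\ast$ would force $(a,b)\in R^\ast$, a contradiction. Hence some cover of $R$ lies outside $R^\ast$. A cover is an edge of the transitive reduction, so it is a single edge with no alternative path. Rewiring it deletes exactly that pair by Step 1, and the invariant $R\supseteq R^\ast$ is kept. Each such step strictly decreases $|R|$, so after finitely many steps $R=R^\ast$.

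\textbf{Step 3: redundant edges (main obstacle).} Rewiring may create transitive edges. For example, with $i\to j\to k$ and $i\to m\to k$, rewiring $(i,j)$ adds the redundant edge $i\to k$. Then the adjacency matrix need not equal $\mathbf{A}^\ast$ even once $R=R^\ast$, and later "covers" are no longer the only edges between their endpoints. My plan has two parts.
\begin{itemize}
\item First, I will try to choose the cover in Step 2 carefully so that no redundant edge is ever created. For instance, I would take $(i,j)$ with $j$ latest in the current topological order and check that every new edge $i\to k$ or $k\to j$ is then a genuine cover.
\item If that fails, I will remove a redundant edge $(u,w)$ by rewiring it. This preserves $R$, since all edges it adds are implied by existing paths. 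I will then need a potential that strictly decreases, to rule out cycling. A candidate is the lexicographic pair $(|R|,\ \sum_{(u,w)\in E}(\text{pos}(w)-\text{pos}(u)))$ with respect to the fixed topological order. I would have to verify that a rewiring of a redundant edge shortens total edge span, which is the delicate calculation.
\end{itemize}
Once termination with $R=R^\ast$ and no redundant edges is established, the current graph is the transitive reduction of $R^\ast$, namely $\mathbf{A}^\ast$.
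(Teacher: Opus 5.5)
Your Steps 1 and 2 are sound. Starting from the chain laid out in a topological order of $G^\ast$, rewiring covers of the current reachability relation $R$ that lie outside $R^\ast$ drives $R$ down to $R^\ast$. The theorem, however, asks for $\mathbf{A}^\ast$ itself, and Step 3, where this has to be closed, is left open. Neither of your two routes works as stated.

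\textbf{Your first route fails.} Take the chain $a\to b\to c\to d$ and the transitively reduced target $a\to c$, $c\to d$, $b\to d$; here $a,b,c,d$ is a topological order. The covers outside $R^\ast$ are $(a,b)$ and $(b,c)$. Your latest-head rule picks $(b,c)$, which produces $a\to b$, $b\to d$, $a\to c$, $c\to d$. Now $(a,b)$ is the only cover outside $R^\ast$, and rewiring it adds $a\to d$, which is redundant via $c$. So your bad configuration is forced, and redundant edges cannot be avoided under that rule.

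\textbf{Your fallback potential also fails.} Rewiring a redundant edge $(u,w)$ deletes one edge of span $\mathrm{pos}(w)-\mathrm{pos}(u)$. It adds every missing $u\to k$ with $k\in N_{\mathrm{out}}(w)$ and every missing $x\to w$ with $x\in N_{\mathrm{in}}(u)$, and each of these has strictly larger span. For example, on $1\to2\to3\to4$ plus $1\to3$, rewiring $(1,3)$ trades a span-$2$ edge for the span-$3$ edge $1\to4$, while $|R|$ stays fixed. So your lexicographic potential increases, and termination of the clean-up is unproved.

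\textbf{The missing idea} is the asymmetry of a rewiring of $(i,j)$. It changes only the out-neighborhood of $i$, where $j$ is replaced by strictly later nodes, and the out-neighborhoods of predecessors of $i$, which are earlier.

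The paper's proof uses exactly this. It sweeps $v_{n-1},\dots,v_1$ in reverse topological order, and at $v_i$ it rewires every out-edge not in $G^\ast$, cover or redundant alike. Out-neighborhoods of already-processed later nodes are never touched again. Suppose a target edge $(v_i,v_m)$ were missing at the end. Reachability $v_i\leadsto v_m$ is never destroyed, because $(v_i,v_m)$ is never rewired. So it would run through a target out-neighbor $v_k$ and the already-correct downstream part, which contradicts transitive reducedness of $G^\ast$.

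Within your own framework the same observation gives a valid potential. Let $\Phi_u=\sum_{w\in N_{\mathrm{out}}(u)}2^{\,n-\mathrm{pos}(w)}$. The vector $(\Phi_{v_n},\dots,\Phi_{v_1})$ decreases lexicographically under \emph{every} rewiring. Its first changed coordinate is $\Phi_i$, which loses $2^{n-\mathrm{pos}(j)}$ and gains at most $2^{n-\mathrm{pos}(j)}-1$.

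You can then simply rewire any current edge not in $\mathbf{A}^\ast$ until none remain. Such an edge is either redundant, in which case rewiring preserves $R$, or it is a cover of $R$ outside $R^\ast$. (A cover of $R$ lying in $R^\ast$ is a cover of $R^\ast$, hence an edge of $\mathbf{A}^\ast$.) So $R\supseteq R^\ast$ is maintained throughout. At termination the edge set is contained in that of $\mathbf{A}^\ast$, which forces $R=R^\ast$ and hence equality. This also removes the need for your Steps 1--2 as a separate phase.
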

Proof is presented in Appendix~\ref{appsec:prof_completeness}.

Together, \cref{prop:reachability,prop:Non-increasingMakespan} motivate Parallel Rewiring and Node Removal as graph-modification operators that preserve workflow connectivity and do not increase makespan. \Cref{prop:completeness} further indicates that the edit operators are expressive enough, in principle, to transform a chain into any transitively reduced DAG under a suitable node ordering, and the Graph Generator (\cref{sec:generator}) is designed to learn initial chain workflows with such suitable node orderings.

\subsection{Latency-Aware Structural Optimization}
\label{sec:proce}
\begin{algorithm}[t]
{\small
\caption{Latency-Aware Structural Optimization}
\label{alg:lawa_overview}
\begin{algorithmic}[1]
\renewcommand{\algorithmicrequire}{\textbf{Input:}}
\renewcommand{\algorithmicensure}{\textbf{Output:}}

\REQUIRE Optional initial graph $G_0$, beam size $B$, max iterations $K$.

\STATE Initialize pretrained Graph Generator $\mathcal{G}_\phi$ and pretrained Graph Evaluator $S_\theta$

\STATE Set $G_0 \leftarrow \mathcal{G}_\phi()$ if $G_0=\varnothing$
\STATE Initialize candidate set $\mathcal{C} \leftarrow \{G_0\}$
\FOR{$k = 1$ to $K$}
    \FORALL{$G \in \mathcal{C}$}
        \STATE Construct candidates by Node Removal: $\mathcal{C} \leftarrow \mathcal{C} \cup \{\operatorname{RemoveNode}(G,u)\mid u \in \mathcal{V}(G)\setminus\{v_I,v_O\}\}$
        \STATE Compute $\mathbf{R}^{\mathrm{rwr}}$ by \cref{eq:r_rewire}
        \FORALL{edge $(u,v)$ with $\mathbf{R}^{\mathrm{rwr}}_{u,v}\!\ge\!0$}
            \STATE Construct candidates by Parallel rewiring: $\mathcal{C} \leftarrow \mathcal{C} \cup \{\operatorname{ParallelRewire}(G,u,v)\}$
        \ENDFOR
    \ENDFOR

    \STATE Performance-Based filtering: $\mathcal{C} \leftarrow \{G' \in \mathcal{C} \mid S_\theta(G') \ge S_\theta(G_0)\}$
    \STATE Candidate update by runtime-first ranking: $\mathcal{C} \leftarrow \text{Top-}B_{T(\cdot)}(\mathcal{C})$
\ENDFOR
\STATE Select final solution:
$\displaystyle 
G^\ast \leftarrow \arg\min_{G \in \mathcal{C}} \big( T(G) \big)$
\ENSURE Optimized workflow $G^\ast$.
\end{algorithmic}}
\end{algorithm}

Based on the graph editing operations defined above, we now describe the procedure for latency-aware structural optimization in LAWA, which operates in two stages.

\paragraph{Workflow Initialization.}
LAWA initializes workflows in two ways: it either refines an existing workflow graph, or constructs an initial graph using the Graph Generator when no prior structure is available. The Graph Generator produces chain-structured graphs, motivated by the expressivity result in \cref{prop:completeness}.
These two modes allow LAWA to function both as a plug-in optimizer for existing workflows and as a de novo workflow generator. 

\paragraph{Iterative Structural Optimization.} In the second stage, LAWA refines workflow structure via a guided beam search to reduce makespan (\cref{alg:lawa_overview}). Before the search, we first estimate node latencies $t_k$ by calibration runs (e.g., each agent's P95 response time on a calibration set). At each beam-search iteration, LAWA follows a three-stage procedure. First, for each workflow $G \in \mathcal{C}$, LAWA generates new candidates by applying one step of Node Removal or Parallel Rewiring. Node Removal is applied to every non-terminal node in $G$ excluding $v_I$ and $v_O$. For Parallel Rewiring, to reduce the number of candidates, we apply it only to edges $(u,v)$ with $\mathbf{R}^{\mathrm{rwr}}_{u,v} \ge 0$, where $\mathbf{R}^{\mathrm{rwr}}_{u,v}$ is the approximate change in downstream task performance induced by rewiring $(u,v)$. \emph{Parallel Rewiring is disallowed for node pairs invoking the same agent, as such nodes cannot be executed in parallel in reality.} Second, LAWA evaluates each candidate workflow $G' \in \mathcal{C}$ using a Graph Evaluator $S_\theta(\cdot)$ and retains only those satisfying the baseline performance constraint $S_\theta(G') \ge S_\theta(G_0)$. Finally, LAWA keeps the Top-$B$ retained workflows with the smallest makespan $T(G)$ as the candidate set $\mathcal{C}$ for the next iteration. To avoid repeatedly executing candidate workflows, the Graph Evaluator is used to predict downstream task performance; $\mathbf{R}^{\mathrm{rwr}}_{u,v}$ is computed from its gradient information.

\subsection{Graph Generator for Initialization}
\label{sec:generator}

The Graph Generator provides an initial chain-structured workflow when no prior structure is available. For each agent node $u$, we first construct an embedding $\mathbf{e}_u$ by encoding the agent role and system prompt with a frozen Sentence-BERT encoder. We formulate workflow initialization as an auto-regressive sequence generation problem, where each token corresponds to the agent embedding $\mathbf{e}_u$, and the model predicts the next token (agent) $a_t$ according to current sequence $a_{<t}$, producing an ordered agent sequence that defines a chain workflow.

\paragraph{Graph Generator Architecture.}
The Graph Generator adopts a transformer encoder $\mathcal{R}_\phi$ for auto-regressive sequence generation. Given a partial agent sequence $a_{<t}$, the encoder produces a contextual hidden representation $\mathbf{h}_t=\mathcal{R}_\phi(a_{<t}) \in \mathbb{R}^d$. The model predicts the next agent $a_t$ by computing a similarity-based distribution between $\mathbf{h}_t$ and the agent embedding matrix $\mathbf{E}$:
\begin{align}
P_\phi(a_t \mid a_{<t}) = \text{Softmax}\!\left( \frac{\mathbf{h}_t \cdot \mathbf{E}^\top}{\sqrt{d}} \right),
\end{align}
During generation, the entry node is used as the initial token, and generation terminates when the exit node is produced.

\textbf{Graph Generator Training.}
We train $\mathcal{R}_\phi$ using a two-stage strategy consisting of supervised pre-training and preference alignment. We use a graph dataset $\mathcal{D}=\{(G,s(G))\}$, where $s(G)$ is the empirical performance of workflow $G$ evaluated on the held-out training set for the corresponding task. From $\mathcal{D}$, we retain highest-scoring graphs $\mathcal{D}_{\text{high}}$ for supervised learning. For each $G \in \mathcal{D}_{\text{high}}$, we sample sequences $y=(a_1, \dots, a_L)$ such that for any indices $1\le i < j\le L$, the corresponding node of $a_i$ in $G$ is not reachable from that of $a_j$, and minimize the negative log-likelihood:
\begin{align}
\mathcal{L}_{\text{SFT}}(\phi) = -\mathbb{E}_{y \sim \mathcal{D}_{\text{high}}} \sum_{t=1}^{L} \log P_\phi(a_t \mid a_{<t}).
\end{align}
To further align generation with workflow performance, we also apply DPO. Preference pairs $(y_w,y_l)$ are constructed from graph $G_w,G_l\in\mathcal{D}$ with $s(G_w)>s(G_l)$, and $\mathcal{R}_\phi$ is trained using the DPO loss:
\begin{align}
\mathcal{L}_{\text{DPO}}(\phi)& = -\mathbb{E}_{(y_w,y_l)}
\left[\log \sigma\!\left(
\beta \log \frac{P_\phi(y_w)}{P_{\text{ref}}(y_w)} -
\beta \log \frac{P_\phi(y_l)}{P_{\text{ref}}(y_l)}
\right)\right],
\end{align}  
where $P_{\text{ref}}$ is the frozen SFT model and $\beta$ controls deviation from the reference policy.

\subsection{Graph Evaluator for Performance Prediction}
\label{sec:eval}

The Graph Evaluator is a supervised model that maps a workflow graph $G$ to a scalar score $S_\theta(G)\in\mathbb{R}$ to estimate the downstream task performance. As only the relative performance of the workflow matters, we train it using preference ranking instead of regression.

\paragraph{Graph Evaluator Architecture.}

We implement the Graph Evaluator $S_\theta$ as a graph neural network (GNN) over the workflow DAG. Each node $u\in\mathcal{V}$ is initialized with the agent embedding $\mathbf{e}_u$ defined in \cref{sec:generator}. The node embeddings are propagated through GNN message-passing layers and then pooled to obtain a graph-level representation, which is mapped to a scalar score by an MLP head:
\begin{equation}
S_\theta(G) = \mathrm{MLP}\!\left(\mathrm{Pool}\big(\mathrm{GNN}(\mathbf{A},\{\mathbf{e}_u\})\big)\right).
\end{equation}

\paragraph{Graph Evaluator Training.}
We train the Graph Evaluator using a preference-based objective. Specifically, we reuse the graph dataset $\mathcal{D}$ defined in \cref{sec:generator}. From $\mathcal{D}$, we construct preference pairs $(G_w, G_l)$ such that $s(G_w) > s(G_l)$. We then train the Graph Evaluator $S_\theta$ using a pairwise preference objective inspired by Direct Preference Optimization (DPO)~\cite{dpo}, which encourages higher scores for preferred workflows:
\begin{align}
\mathcal{L}_{\text{pairwise}} = - \log \sigma\!\left( \beta(\Delta_\pi - \Delta_{\text{ref}}) \right),
\end{align}
where $\sigma(\cdot)$ is the sigmoid function; $\beta\in\mathbb{R}$ is a hyperparameter; $\Delta_\pi = S_\theta(G_w) - S_\theta(G_l)$ denotes the predicted score difference; and $\Delta_{\text{ref}}$ is provided by a reference model (e.g., current best model on the validation set).

\paragraph{Gradient-Based Approximation for Edge Editing.}
To filter out rewiring operations that are unlikely to improve downstream task performance, we approximate the performance change of Parallel Rewiring using first-order gradients of the Graph Evaluator with respect to the workflow adjacency matrix $\mathbf{A}$. 
Let $\mathbf{\Gamma} = \partial S_\theta / \partial \mathbf{A}$ denote the gradient of the Graph Evaluator with respect to $\mathbf{A}$. 
For each potential rewiring $(i,j)$, the induced score change is approximated by the matrix inner product $\langle \mathbf{\Gamma}, \Delta \mathbf{A} \rangle$,
where $\Delta \mathbf{A}$ represents the change of $\mathbf{A}$ after the rewiring. The score change of all edges can be approximated by:
\begin{align}
\mathbf{R}^{\mathrm{rwr}} = -\, \mathbf{H} \;+\; \tilde{\mathbf{H}}\, \mathbf{A}^\top \;+\; \mathbf{A}^\top \tilde{\mathbf{H}},
\label{eq:r_rewire}
\end{align}
where $\mathbf{H} = \mathbf{\Gamma} \odot \mathbf{A}$, $\tilde{\mathbf{H}} = \mathbf{\Gamma} \odot (1-\mathbf{A})$, $\odot$ denotes the Hadamard product, and $\mathbf{R}^{\mathrm{rwr}}_{i,j}$ represents the estimated score change of rewiring edge $(i,j)$. The derivations are provided in Appendix~\ref{app:rewire_derivation}.

We further provide a theoretical analysis of a perturbed Graph Evaluator in Appendix~\ref{appsec:flip_prob}, showing that LAWA can still preserve performance with high probability under mild evaluator decision errors.

\section{Experiments}
In this section, we conduct extensive experiments to evaluate LAWA through four research questions.
(\textbf{RQ1}) How does LAWA, as a workflow generation framework, compare to prior methods in generating workflows?
(\textbf{RQ2}) Can LAWA, when used as a plug-in module, reduce workflow makespan while preserving performance?
(\textbf{RQ3}) How robust is LAWA when Graph Evaluator decisions are inaccurate or perturbed?
(\textbf{RQ4}) How do the individual components of LAWA contribute to its latency and performance?
Additional experiments are provided in Appendix~\ref{appsec:add_ablations}.

\subsection{Experimental Setup}

\noindent\textbf{Benchmarks.}
We conduct evaluations across seven benchmarks spanning three domains.
The code generation domain includes MBPP and HumanEval~\cite{MBPP,humaneval}.
The mathematical reasoning domain includes GSM8K, MATH500, SVAMP, and AQuA~\cite{gsm8k,math500,SVAMP,aqua}.
The general knowledge domain is evaluated using MMLU~\cite{MMLU}.
For each benchmark, we reserve a small subset of samples for training the proposed method and baselines, and hold them out from evaluation (see Appendix~\ref{appsec:exp_details}).

\noindent\textbf{Baselines.}
We compare LAWA against state-of-the-art dynamic multi-agent frameworks, including \textbf{DyLAN}~\cite{dylan}, \textbf{G-Designer}~\cite{gdesigner}, \textbf{AgentDropout}~\cite{agent_dropout}, \textbf{AgentPrune}~\cite{agentprune}, and \textbf{BiRouter}~\cite{birouter}, as well as the accuracy-oriented workflow generator \textbf{AFlow}~\cite{aflow}, and the latency-aware framework \textbf{LAMaS}~\cite{lamas_latency}. In addition, we include three basic workflow topologies:
(1) \textbf{Random Chain}, which randomly permutes all available agents into a sequential structure;
(2) \textbf{Random Graph}, which constructs a directed acyclic graph (DAG) using all agents, where edges are sampled with probability $p=0.5$ while enforcing acyclicity;
and (3) \textbf{3-Round Discussion}, which simulates a round-table discussion where all agents execute in parallel for three rounds, with the full context of the previous round broadcast to all agents in the subsequent round.

\noindent \textbf{Agent Backbones.} For a fair comparison, all methods use the same pool of agents built on \texttt{Qwen3-30B-A3B-Instruct-2507}~\cite{qwen3technicalreport}. Each agent is configured with different roles and functionalities. Across task domains, each method can access 8 candidate agents for coding, 10 for general knowledge, and 11 for mathematical reasoning. Additional details are provided in Appendix~\ref{appsec:exp_details}.

\textbf{Wall-Clock Makespan Measurement.} We report real end-to-end wall-clock makespan per query under the same computational budget; detailed settings are provided in Appendix~\ref{appsec:exp_details}. For LAWA, the $t_k$ used during optimization is estimated as each agent's P95 response time on training-set queries.

\begin{table*}[t]
\centering
\caption{\small \textbf{Real wall-clock makespan (Seconds$\downarrow$) per query on 7 benchmarks.} All methods are evaluated using agents powered by \texttt{Qwen3-30B-A3B-Instruct-2507} with different roles. Each method is tested three times and results are mean$\stdsub{\text{std}}$. The $\best{\textbf{first}}$ and $\second{\textbf{second}}$ best entries in each column are highlighted. LAWA achieves the lowest makespan on six of seven benchmarks and the best average rank.}
\vspace{1mm}
\label{tab:time_results}
\resizebox{0.99\textwidth}{!}{%
\begin{tabular}{lccccccc|c}
\toprule
\rowcolor{rowgray}
\textbf{Method}
& \textbf{MBPP}
& \textbf{HumanEval}
& \textbf{AQuA}
& \textbf{GSM8K}
& \textbf{MATH500}
& \textbf{SVAMP}
& \textbf{MMLU}
& \textbf{Avg. Rank} \\
\midrule
Rand. Graph
& $16.92\stdsub{1.53}$
& $19.80\stdsub{1.48}$
& $24.16\stdsub{0.42}$
& $17.13\stdsub{1.24}$
& $26.33\stdsub{1.41}$
& $17.24\stdsub{1.12}$
& $18.47\stdsub{0.43}$
& $8.86$ \\

\rowcolor{rowgray}
Rand. Chain
& $22.67\stdsub{1.52}$
& $22.26\stdsub{0.94}$
& $31.34\stdsub{2.65}$
& $27.90\stdsub{0.08}$
& $36.86\stdsub{0.29}$
& $23.54\stdsub{0.25}$
& $24.06\stdsub{0.22}$
& $10.57$ \\

3-Round Disc.
& $17.52\stdsub{2.84}$
& $15.08\stdsub{0.71}$
& $17.04\stdsub{0.61}$
& $15.06\stdsub{0.10}$
& $21.04\stdsub{0.94}$
& $13.60\stdsub{0.76}$
& $13.93\stdsub{0.13}$
& $7.29$ \\

\midrule
\rowcolor{rowgray}
DyLAN
& $11.46\stdsub{1.34}$
& $\second{11.29}\stdsub{0.08}$
& $13.65\stdsub{1.10}$
& $\second{8.09}\stdsub{0.19}$
& $16.63\stdsub{0.23}$
& $9.38\stdsub{0.61}$
& $9.11\stdsub{0.26}$
& $3.14$ \\

G-Designer
& $25.27\stdsub{1.75}$
& $24.01\stdsub{1.28}$
& $15.56\stdsub{0.26}$
& $9.74\stdsub{0.55}$
& $21.34\stdsub{9.46}$
& $23.47\stdsub{4.08}$
& $25.32\stdsub{2.43}$
& $9.43$ \\

\rowcolor{rowgray}
AgentDropout
& $13.92\stdsub{5.56}$
& $14.24\stdsub{3.73}$
& $13.85\stdsub{0.42}$
& $9.57\stdsub{0.95}$
& $21.54\stdsub{7.58}$
& $16.72\stdsub{2.80}$
& $17.31\stdsub{4.21}$
& $6.14$ \\

AgentPrune
& $20.18\stdsub{2.54}$
& $18.28\stdsub{1.24}$
& $14.56\stdsub{2.15}$
& $9.20\stdsub{0.08}$
& $19.25\stdsub{2.12}$
& $18.75\stdsub{0.69}$
& $19.01\stdsub{1.44}$
& $7.29$ \\

\rowcolor{rowgray}
BiRouter
& $13.54\stdsub{0.85}$
& $17.40\stdsub{3.51}$
& $14.19\stdsub{2.15}$
& $9.95\stdsub{1.12}$
& $18.64\stdsub{1.20}$
& $8.93\stdsub{0.54}$
& $12.17\stdsub{1.67}$
& $5.57$ \\

AFlow
& $10.24\stdsub{1.79}$
& $14.49\stdsub{2.86}$
& $15.53\stdsub{1.27}$
& $9.51\stdsub{0.57}$
& $18.55\stdsub{1.03}$
& $8.63\stdsub{0.84}$
& $11.00\stdsub{1.29}$
& $4.43$ \\

\rowcolor{rowgray}
LAMaS
& $\best{8.82}\stdsub{0.97}$
& $11.87\stdsub{1.38}$
& $\second{9.92}\stdsub{2.37}$
& $8.78\stdsub{1.80}$
& $\second{11.85}\stdsub{1.16}$
& $\second{8.12}\stdsub{0.88}$
& $\second{8.39}\stdsub{0.24}$
& $\second{2.14}$ \\

\midrule
\textbf{LAWA}
& $\second{8.84}\stdsub{0.79}$
& $\best{11.26}\stdsub{1.03}$
& $\best{9.86}\stdsub{0.38}$
& $\best{7.30}\stdsub{0.10}$
& $\best{11.80}\stdsub{0.48}$
& $\best{6.18}\stdsub{0.56}$
& $\best{7.86}\stdsub{1.28}$
& $\best{1.14}$ \\

\bottomrule
\end{tabular}}
\vspace{-3mm}
\end{table*}

\begin{table*}[t]
\centering
\caption{\small \textbf{Comparison of task performance (Accuracy$\uparrow$) on 7 Benchmarks.} All methods are evaluated using agents powered by \texttt{Qwen3-30B-A3B-Instruct-2507} with different roles. Each method is tested three times and results are mean$_{\pm \text{std}}$. The $\best{\textbf{first}}$ and $\second{\textbf{second}}$ best entries in each column are bolded with colors. LAWA maintains comparable or superior accuracy to the baselines, achieving the best average rank.}
\vspace{1mm}
\label{tab:accuracy_results}
\resizebox{0.99\textwidth}{!}{%
\begin{tabular}{lccccccc|c}
\toprule
\rowcolor{rowgray}
\textbf{Method}
& \textbf{MBPP}
& \textbf{HumanEval}
& \textbf{AQuA}
& \textbf{GSM8K}
& \textbf{MATH500}
& \textbf{SVAMP}
& \textbf{MMLU}
& \textbf{Avg. Rank} \\
\midrule

Rand. Graph
& $63.94\stdsub{0.49}$
& $92.07\stdsub{2.99}$
& $80.58\stdsub{0.67}$
& $93.23\stdsub{0.20}$
& $66.53\stdsub{0.41}$
& $93.67\stdsub{0.62}$
& $65.33\stdsub{4.94}$
& $7.36$ \\

\rowcolor{rowgray}
Rand. Chain
& $64.07\stdsub{0.73}$
& $91.67\stdsub{0.76}$
& $80.31\stdsub{0.64}$
& $93.91\stdsub{1.15}$
& $66.47\stdsub{0.57}$
& $93.30\stdsub{0.37}$
& $64.11\stdsub{5.60}$
& $8.00$ \\

3-Round Disc.
& $64.33\stdsub{1.12}$& $92.89\stdsub{0.76}$
& $72.57\stdsub{0.98}$
& $94.77\stdsub{0.21}$
& $63.80\stdsub{0.28}$
& $\best{94.27}\stdsub{0.29}$
& $65.66\stdsub{0.29}$
& $6.64$ \\

\midrule
\rowcolor{rowgray}
DyLAN
& $64.72\stdsub{1.02}$
& $\second{93.29}\stdsub{1.32}$
& $80.71\stdsub{0.96}$
& $95.80\stdsub{0.23}$
& $65.87\stdsub{0.41}$
& $93.67\stdsub{0.24}$
& ${68.03}\stdsub{0.09}$
& $4.79$ \\

G-Designer
& $63.81\stdsub{1.10}$
& $89.84\stdsub{0.76}$
& $76.77\stdsub{1.79}$
& $94.16\stdsub{0.33}$
& $66.13\stdsub{1.09}$
& ${94.20}\stdsub{0.29}$
& $67.71\stdsub{0.17}$
& $7.71$ \\

\rowcolor{rowgray}
AgentDropout
& ${64.59}\stdsub{1.91}$
& $91.06\stdsub{2.24}$
& ${80.84}\stdsub{1.30}$
& $93.58\stdsub{0.83}$
& $66.47\stdsub{0.09}$
& $93.27\stdsub{0.65}$
& $65.97\stdsub{2.15}$
& $7.07$ \\

AgentPrune
& $64.46\stdsub{0.66}$
& $90.45\stdsub{2.24}$
& $79.13\stdsub{3.36}$
& $94.06\stdsub{0.43}$
& $\best{67.33}\stdsub{0.77}$
& ${94.20}\stdsub{0.57}$
& $72.12\stdsub{5.66}$
& $5.64$ \\

\rowcolor{rowgray}
BiRouter
& $63.81\stdsub{0.84}$
& $92.48\stdsub{0.76}$& $82.15\stdsub{1.94}$
& ${95.05}\stdsub{0.28}$
& ${66.73}\stdsub{0.38}$
& $93.90\stdsub{1.00}$
& $65.09\stdsub{0.64}$
& $5.93$ \\

AFlow
& $\second{65.24}\stdsub{0.22}$
& $\best{93.70}\stdsub{0.93}$
& $\second{83.20}\stdsub{0.23}$
& $\best{96.06}\stdsub{0.95}$
& $66.80\stdsub{0.69}$
& $92.13\stdsub{0.83}$
& $\second{76.04}\stdsub{1.48}$
& $\second{3.14}$ \\

\rowcolor{rowgray}
LAMaS
& $64.98\stdsub{0.39}$
& $91.67\stdsub{0.93}$
& $79.53\stdsub{5.05}$
& $92.12\stdsub{1.14}$
& $65.33\stdsub{1.67}$
& $92.70\stdsub{2.17}$
& $66.99\stdsub{1.71}$
& $7.93$ \\

\midrule
\textbf{LAWA}
& $\best{66.28}\stdsub{0.37}$
& $92.89\stdsub{1.04}$
& $\best{83.46}\stdsub{0.85}$
& $\second{96.03}\stdsub{0.20}$
& $\second{67.00}\stdsub{0.86}$
& $\second{94.23}\stdsub{0.21}$
& $\best{77.77}\stdsub{3.15}$
& $\best{1.79}$ \\
\bottomrule
\end{tabular}}
\vspace{-3mm}
\end{table*}

\subsection{RQ1: LAWA as a Workflow Generator}

We first evaluate the effectiveness of LAWA as a workflow generator by comparing it with prior methods in terms of both execution efficiency and downstream task performance. In this setting, LAWA uses the Graph Generator to initialize workflows. To mitigate randomness, we execute all methods across 3 independent runs and report average results.

\paragraph{Makespan Analysis.}
Table~\ref{tab:time_results} reports real wall-clock makespan of generated workflows across seven benchmarks. LAWA is the fastest method on six of seven benchmarks and is only 0.02s slower than LAMaS on MBPP, while outperforming LAMaS in task accuracy on all seven benchmarks (Table~\ref{tab:accuracy_results}).
LAWA is also efficient to obtain: since the search operates on graph structures without executing agents, its one-time search and evaluator-training cost is recovered within a few recurrent executions, and even with graph-dataset construction included, the full pipeline remains substantially faster than the on-policy AgentDropout (Appendix~\ref{appsec:rq1_overhead}).

To intuitively demonstrate why LAWA achieves lower makespan, \cref{fig:graph_compare} visualizes three example workflow topologies and corresponding Gantt charts produced by representative methods on the MBPP benchmark. DyLAN adopts a layered, neural-network-like execution pattern that enables high parallelism within each round but enforces strict inter-round synchronization. This causes each round to wait for the slowest agent in the previous layer, leading to higher overall makespan. Another baseline, AgentPrune, simplifies workflows by pruning nodes and edges but does not explicitly target critical-path bottlenecks, often leaving a long critical path that dominates end-to-end makespan. In contrast, LAWA produces workflows with shorter critical paths. We include a case study on how LAWA reduces the critical path in Appendix~\ref{appsec:case_study}.

\paragraph{Performance Analysis.}
Table~\ref{tab:accuracy_results} summarizes task performance across all benchmarks. Despite aggressively optimizing workflow topology for makespan reduction, LAWA consistently maintains comparable or superior accuracy to the state-of-the-art baselines, achieving the best average rank among all methods, including the strong accuracy-oriented AFlow. These results show that, as a workflow generation method, LAWA can produce workflow graphs that achieve both low makespan and strong downstream performance. We also report total token consumption in Appendix~\ref{appsec:rq1_token}, where LAWA still remains token-efficient, and results with the frontier model Kimi-K2.6 in Appendix~\ref{appsec:frontier}.

\begin{figure*}[t]
    \centering
    \includegraphics[width=\linewidth]{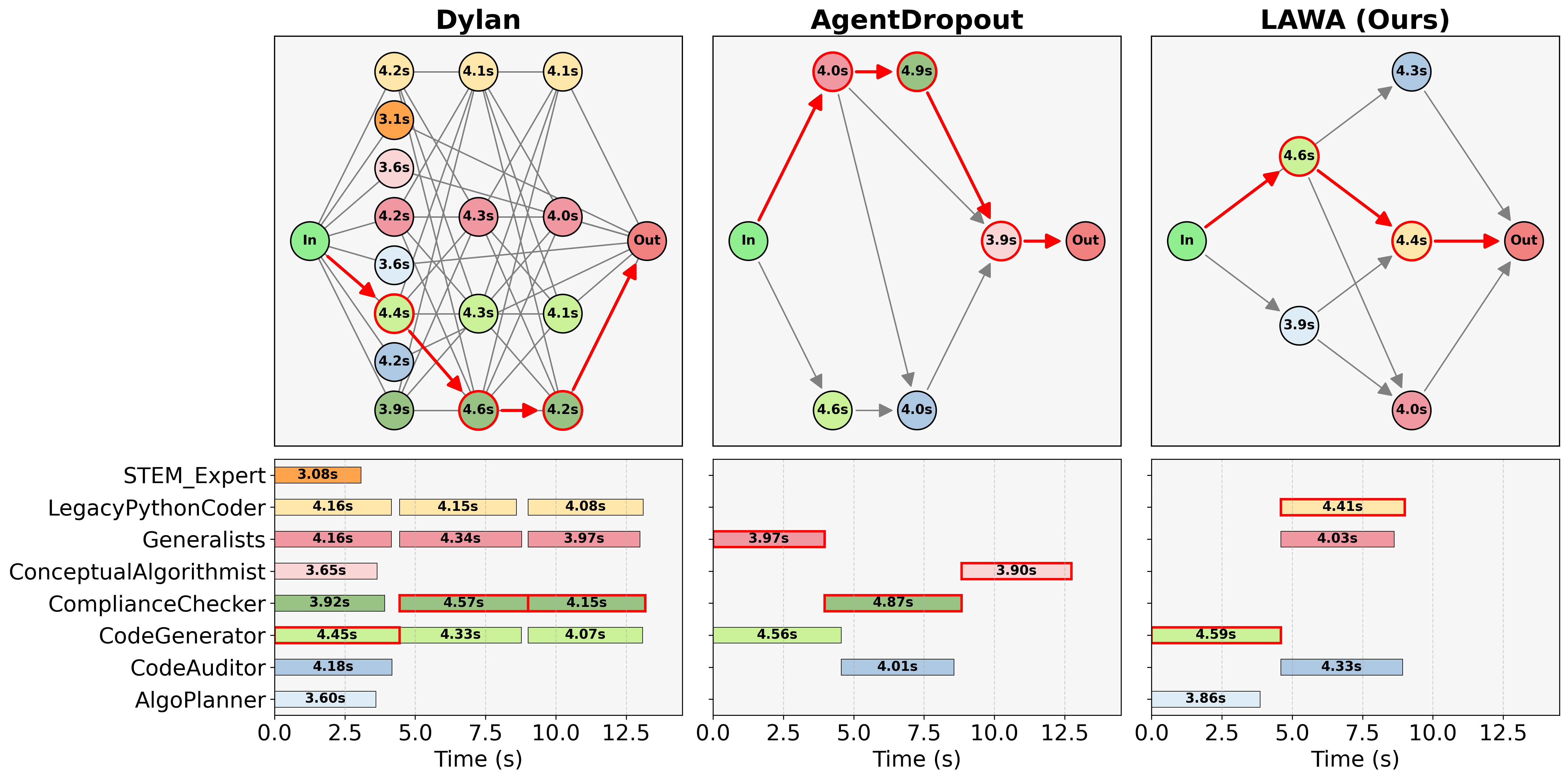}
    \vspace{-5mm}
    \caption{\small \textbf{Visualization of Workflow Topology and Execution Gantt Chart.} We visualize the workflow of three representative methods on one MBPP task. The top row shows the workflow graphs where red arrows denote the critical path, with the wall-clock latency labeled on each node. The bottom row shows the corresponding Gantt charts, with red-outlined blocks highlighting critical path tasks. LAWA generates the workflow with the shortest critical path to achieve maximal parallel execution and the shortest makespan.}
    \label{fig:graph_compare}
    \vspace{-3mm}
\end{figure*}

\begin{table*}[t]
\centering
\caption{\small \textbf{LAWA as a Plug-in Module.} We apply LAWA to the workflow graphs generated by four baseline methods and report the resulting changes in accuracy and makespan (mean$_{\pm \text{std}}$) relative to the original baseline performance. LAWA consistently compresses the makespan of these diverse initial workflows with non-negative mean accuracy changes in the reported runs, validating its role as a flexible plug-in module.}
\vspace{1mm}
\label{tab:ablation_transposed}
\setlength{\tabcolsep}{6.5pt}
\setlength{\cmidrulesep}{0.2pt}
\setlength{\aboverulesep}{0.7pt}
\setlength{\belowrulesep}{0.7pt}
\small
\resizebox{0.99\textwidth}{!}{%
\begin{tabular}{
l
>{\columncolor{rowgray}}c>{\columncolor{rowgray}}c
cc
>{\columncolor{rowgray}}c>{\columncolor{rowgray}}c
cc
}
\toprule
\multirow{2}{*}{\textbf{Dataset}}&
\multicolumn{2}{>{\columncolor{rowgray}}c}{\textbf{Random Graph}} &
\multicolumn{2}{>{\columncolor{white}}c}{\textbf{Random Chain}} &
\multicolumn{2}{>{\columncolor{rowgray}}c}{\textbf{AgentPrune}} &
\multicolumn{2}{>{\columncolor{white}}c}{\textbf{AgentDropout}} \\
\cmidrule(lr){2-3} \cmidrule(lr){4-5} \cmidrule(lr){6-7} \cmidrule(lr){8-9}
& {Acc (\%)} & {Makespan (s)}
& {Acc (\%)} & {Makespan (s)}
& {Acc (\%)} & {Makespan (s)}
& {Acc (\%)} & {Makespan (s)} \\
\midrule

HumanEval
& $\AccMean{+1.63}\stdsub{3.53}$ & $\LatMean{-4.40}\stdsub{0.97}$
& $\AccMean{+0.20}\stdsub{0.57}$ & $\LatMean{-7.61}\stdsub{0.04}$
& $\AccMean{+1.42}\stdsub{1.88}$ & $\LatMean{-2.79}\stdsub{0.46}$
& $\AccMean{+0.81}\stdsub{2.35}$ & $\LatMean{-2.19}\stdsub{1.56}$ \\

MBPP
& $\AccMean{+0.26}\stdsub{0.49}$ & $\LatMean{-5.39}\stdsub{1.49}$
& $\AccMean{+0.39}\stdsub{0.84}$ & $\LatMean{-6.86}\stdsub{0.06}$
& $\AccMean{+0.91}\stdsub{1.20}$ & $\LatMean{-2.41}\stdsub{0.72}$
& $\AccMean{+0.13}\stdsub{1.63}$ & $\LatMean{-2.75}\stdsub{0.73}$ \\

AQuA
& $\AccMean{+0.52}\stdsub{0.74}$ & $\LatMean{-4.92}\stdsub{1.59}$
& $\AccMean{+0.52}\stdsub{0.19}$ & $\LatMean{-8.05}\stdsub{0.11}$
& $\AccMean{+1.84}\stdsub{3.17}$ & $\LatMean{-3.65}\stdsub{0.06}$
& $\AccMean{+0.79}\stdsub{1.96}$ & $\LatMean{-3.21}\stdsub{0.12}$ \\

GSM8K
& $\AccMean{+0.25}\stdsub{0.25}$ & $\LatMean{-6.16}\stdsub{1.33}$
& $\AccMean{+0.08}\stdsub{1.07}$ & $\LatMean{-8.17}\stdsub{0.01}$
& $\AccMean{+0.08}\stdsub{0.75}$ & $\LatMean{-3.06}\stdsub{0.43}$
& $\AccMean{+1.26}\stdsub{1.26}$ & $\LatMean{-2.78}\stdsub{0.24}$ \\

MATH500
& $\AccMean{+0.47}\stdsub{1.57}$ & $\LatMean{-6.63}\stdsub{0.57}$
& $\AccMean{+0.47}\stdsub{0.57}$ & $\LatMean{-8.66}\stdsub{0.16}$
& $\AccMean{+0.07}\stdsub{0.90}$ & $\LatMean{-3.35}\stdsub{0.42}$
& $\AccMean{+0.13}\stdsub{0.52}$ & $\LatMean{-3.71}\stdsub{0.13}$ \\

SVAMP
& $\AccMean{+1.23}\stdsub{0.78}$ & $\LatMean{-5.53}\stdsub{1.43}$
& $\AccMean{+0.03}\stdsub{0.21}$ & $\LatMean{-8.17}\stdsub{0.01}$
& $\AccMean{+0.03}\stdsub{1.01}$ & $\LatMean{-3.70}\stdsub{0.33}$
& $\AccMean{+0.63}\stdsub{0.69}$ & $\LatMean{-2.92}\stdsub{0.73}$ \\

MMLU
& $\AccMean{+1.94}\stdsub{5.88}$ & $\LatMean{-4.61}\stdsub{0.33}$
& $\AccMean{+3.26}\stdsub{3.87}$ & $\LatMean{-8.62}\stdsub{0.29}$
& $\AccMean{+0.02}\stdsub{1.10}$ & $\LatMean{-3.14}\stdsub{0.13}$
& $\AccMean{+0.11}\stdsub{5.74}$ & $\LatMean{-3.48}\stdsub{0.58}$ \\
\bottomrule
\end{tabular}}
\vspace{-3mm}
\end{table*}

\begin{table*}[!t]
\centering
\caption{\small \textbf{RQ3: Robustness under evaluator perturbations.} We flip the evaluator's keep/reject decision with probability $\alpha$, and report the task accuracy (\%) of the generated workflows as mean$\stdsub{\text{std}}$. Performance remains close to the unperturbed setting for small flip rates ($\alpha \le 0.2$).}
\vspace{1mm}
\label{tab:rq3_flip}
{
\renewcommand{\arraystretch}{1.1}
\resizebox{0.5\textwidth}{!}{%
\begin{tabular}{lccc}
\toprule
\rowcolor{rowgray}
\textbf{Dataset}
& \textbf{$\alpha$=0.00}
& \textbf{$\alpha$=0.10}
& \textbf{$\alpha$=0.20} \\
\midrule
HumanEval
& $92.89\stdsub{1.04}$
& $92.48\stdsub{0.35}$
& $91.67\stdsub{1.53}$ \\

\rowcolor{rowgray}
MBPP
& $66.28\stdsub{0.37}$
& $66.28\stdsub{0.59}$
& $65.76\stdsub{0.67}$ \\

AQuA
& $83.46\stdsub{0.85}$
& $83.34\stdsub{0.82}$
& $82.83\stdsub{1.38}$ \\

\rowcolor{rowgray}
MATH500
& $67.00\stdsub{0.86}$
& $66.73\stdsub{0.31}$
& $65.87\stdsub{0.23}$ \\
\bottomrule
\end{tabular}}
}
\vspace{-3mm}
\end{table*}

\begin{table*}[!t]
\centering
\caption{\small \textbf{RQ4: Component ablations.} Changes in accuracy (\%) and makespan (s) relative to full LAWA. Removing any component reduces accuracy on all three benchmarks.}
\vspace{1mm}
\label{tab:component_ablation}
{
\renewcommand{\arraystretch}{1.1}
\resizebox{0.8\textwidth}{!}{%
\begin{tabular}{lcccccc}
\toprule
\rowcolor{rowgray}
& \multicolumn{2}{c}{\textbf{AQuA}}
& \multicolumn{2}{c}{\textbf{SVAMP}}
& \multicolumn{2}{c}{\textbf{MMLU}} \\
\cmidrule(lr){2-3} \cmidrule(lr){4-5} \cmidrule(lr){6-7}
\rowcolor{rowgray}
\textbf{Ablation}
& $\Delta$\textbf{Acc.} & $\Delta$\textbf{Time}
& $\Delta$\textbf{Acc.} & $\Delta$\textbf{Time}
& $\Delta$\textbf{Acc.} & $\Delta$\textbf{Time} \\
\midrule
w/o Node Removal & $-0.65$ & $+1.96$ & $-4.71$ & $+8.30$ & $-3.36$ & $+15.92$ \\
\rowcolor{rowgray}
w/o Parallel Rewiring & $-0.78$ & $+1.64$ & $-4.45$ & $+4.97$ & $-2.25$ & $+18.67$ \\
w/o SFT & $-2.75$ & $-1.96$ & $-0.68$ & $+2.37$ & $-1.66$ & $+5.51$ \\
\rowcolor{rowgray}
w/o DPO & $-3.41$ & $+7.26$ & $-3.40$ & $-0.36$ & $-6.81$ & $+13.08$ \\
\bottomrule
\end{tabular}}
}
\vspace{-3mm}
\end{table*}

\subsection{RQ2: LAWA as a Plug-in Module}

We further evaluate LAWA as a plug-in structural optimizer by applying it to workflows generated by different baseline methods, without modifying their original generation procedures. Specifically, we extract the workflow graphs produced by each baseline and use each graph as the initialization for LAWA. To mitigate randomness, each baseline is executed three times, and we report the average changes in performance and makespan before and after applying LAWA.

As shown in Table~\ref{tab:ablation_transposed}, LAWA consistently reduces the makespan of workflows produced by all baseline methods while showing no systematic degradation in mean task performance in the reported settings. This suggests that the proposed graph modification operators can compress makespan with limited impact on downstream accuracy. Notably, these gains are observed across diverse initial topologies, including both chain-structured and general DAG workflows. This indicates that LAWA’s optimization behavior does not rely on chain-structured initialization. Overall, the results demonstrate that LAWA can serve as a flexible plug-in module for latency-aware workflow optimization.

\subsection{RQ3: Robustness under Inaccurate or Perturbed Evaluator Decisions}

To evaluate robustness under an imperfect evaluator, we flip the evaluator's keep/reject decision with probability $\alpha$ at the filtering step in Algorithm~\ref{alg:lawa_overview} (line 12) for each candidate graph during optimization and report final task performance under small perturbation rates $\alpha \in \{0.0,0.1,0.2\}$. As shown in Table~\ref{tab:rq3_flip}, LAWA remains stable under mild evaluator perturbations, with performance close to the unperturbed setting for $\alpha \le 0.2$. These results suggest that occasional evaluator mistakes have limited impact on final performance under mild perturbations.

\subsection{RQ4: Component Ablations}

To isolate the sources of LAWA's latency and performance gains, we ablate the two graph-editing operations and the two training stages of the Graph Generator, and report the resulting changes relative to full LAWA in Table~\ref{tab:component_ablation}. Removing either Node Removal or Parallel Rewiring restricts the optimizer to a single type of edit; the resulting optimization terminates earlier and yields longer makespans. Both ablations also reduce accuracy on all three benchmarks, as the available structures are more limited. For Graph Generator training, removing either SFT or DPO reduces accuracy on all three benchmarks, with consistently larger degradation without DPO. Both training stages therefore contribute to performance, with DPO having the larger effect.

\section{Related Works}

\paragraph{Workflow Generation for LLM-MAS.}
Empowered by Large Language Models, LLM-based multi-agent systems leverage multiple specialized agents and structured coordination to solve complex tasks across domains~\cite{autogpt, li2023camelcommunicativeagentsmind, metagpt}. Early systems often rely on manually specified protocols, static interaction topologies, or SOP-style workflows~\cite{wu2023autogenenablingnextgenllm, qian2024chatdevcommunicativeagentssoftware, zhang2024proagentbuildingproactivecooperative}. To reduce manual design effort, recent work explores automated workflow generation and orchestration, including LLM-generated or modular workflow synthesis~\cite{aflow, robustflow, flow_modularized}, parallel or broadcast-style collaboration~\cite{dylan, seqopt}, routing-based agent selection~\cite{agentnet, birouter, masrouter}, and graph-based topology learning or pruning~\cite{agentprune, agent_dropout, gdesigner, assemble_your_crew}. Despite these advances, most methods primarily optimize task performance, communication efficiency, or token cost, rather than explicitly minimizing workflow makespan.

\paragraph{Makespan Optimization for LLM-MAS.}
Latency is an increasingly important bottleneck for LLM-MAS, since multi-step agent workflows introduce repeated LLM calls, synchronization barriers, and service-side congestion~\cite{kairos_llm_serving, seqopt}. Existing latency-oriented studies mainly address the LLM serving layer~\cite{kairos_llm_serving, halo_batch_query, aragog, chimera_serving, nalar, thunderagent}, while optimization of the workflow topology itself remains relatively underexplored. Recent topology learning methods such as LAMaS~\cite{lamas_latency} and AgentBalance~\cite{agentbalance} begin to incorporate latency into MAS topology design; however, LAMaS focuses on predefined layer-wise parallel orchestration, while AgentBalance mainly treats latency as a budget constraint without a theoretical analysis of topology-level makespan reduction. LAWA fills this gap by optimizing arbitrary DAG-structured LLM-MAS workflows and providing theoretical characterizations of graph edits in terms of reachability preservation, non-increasing makespan, and expressivity.

\section{Conclusion}

In this paper, we present LAWA, a latency-aware framework for optimizing Large Language Model-based multi-agent workflows. LAWA uses makespan-guided graph editing to shorten critical paths, while a learned evaluator guides the search toward workflows that preserve downstream task performance. Across seven benchmarks, LAWA produces lower-latency execution graphs and can also serve as a plug-in optimizer for existing workflows. We hope this work encourages further study of workflow-level latency optimization in multi-agent systems. Additional discussion of limitations is provided in Appendix~\ref{appsec:limitations}.


\newpage
\bibliographystyle{plain}
\bibliography{ref}  
\medskip


\newpage
\appendix
\section{Proofs of Theoretical Properties}
\label{appsec:proofs}

Here we provide formal proofs for the properties in the main text.

\subsection{Proof of \cref{prop:reachability}}
\label{appsec:prof_reachability}

\begin{proof}
We analyze the two operations separately:

\noindent\textbf{1. Node Removal (of $k$):}
Let $k\in\mathcal{V}\setminus\{v_I,v_O\}$ and $\mathcal{V}' = \mathcal{V} \setminus \{k\}$. We consider any $u, v \in \mathcal{V}'$.

\noindent$(\implies)$ \textit{Forward Direction:}
Assume $u \leadsto_{G} v$. Let $P$ be a path from $u$ to $v$ in $G$.
\begin{itemize}[leftmargin=*]
    \item If $k \notin P$, then $P$ consists entirely of nodes in $\mathcal{V}'$ and edges unchanged by the operation. Thus, $P$ exists in $G'$.
    \item If $k \in P$, the path contains a segment $x \to k \to y$. The removal operation explicitly adds the shortcut edge $x \to y$ to $G'$ whenever $x \to k$ and $k \to y$ exist in $G$. We can replace every occurrence of $x \to k \to y$ with the single edge $x \to y$, yielding a valid path in $G'$.
\end{itemize}

\noindent$(\impliedby)$ \textit{Reverse Direction:}
Assume $u \leadsto_{G'} v$. Let $P'$ be a path in $G'$.
Edges in $G'$ are either original edges from $G$ or newly created shortcut edges.
\begin{itemize}[leftmargin=*]
    \item If an edge $(x, y) \in P'$ exists in $G$, it represents the direct connection $x \to y$.
    \item If an edge $(x, y) \in P'$ is a new shortcut not in $G$, it implies that $x$ was a predecessor of $k$ and $y$ was a successor of $k$ in $G$. Thus, this edge corresponds to the path segment $x \to k \to y$ in $G$.
\end{itemize}

\noindent\textbf{2. Parallel Rewiring (of edge $i \to j$):}
We consider $u, v \in \mathcal{V} \setminus \{i, j\}$. Note that the operation removes $i \to j$ and updates connectivity such that $j$ inherits $i$'s predecessors (incoming merge) and $i$ inherits $j$'s successors (outgoing merge).

\noindent$(\implies)$ \textit{Forward Direction:}
Assume $u \leadsto_{G} v$. Let $P$ be a path in $G$.
\begin{itemize}[leftmargin=*]
    \item If $(i, j) \notin P$, the path from $u$ to $v$ remains intact in $G'$ since only edge $i \to j$ is removed.
    \item If $(i, j) \in P$, the path from $u$ to $v$ contains a segment $x \to i \to j \to y$.
    In $G'$, the operation sets $\mathbf{A}'_{xj} \leftarrow 1$ (since $x \to i$) and $\mathbf{A}'_{iy} \leftarrow 1$ (since $j \to y$).
    We can bypass the removed edge by using the new edge $x \to j$, creating a path $... \to x \to j \to y \to ...$ in $G'$.
\end{itemize}

\noindent$(\impliedby)$ \textit{Reverse Direction:}
Assume $u \leadsto_{G'} v$. Let $P'$ be a path in $G'$.
Any edge in $P'$ is either an original edge or a newly formed edge due to merging neighborhoods.
\begin{itemize}[leftmargin=*]
    \item A new edge of type $x \to j$ exists in $G'$ only if $x \to i$ existed in $G$. Thus, $x \to j$ in $P'$ corresponds to the segment $x \to i \to j$ in $G$.
    \item A new edge of type $i \to y$ exists in $G'$ only if $j \to y$ existed in $G$. Thus, $i \to y$ in $P'$ corresponds to the segment $i \to j \to y$ in $G$.
\end{itemize}
Since $u, v \neq i, j$, the path must start and end outside the rewired edge. By expanding any new edges back to their original segments involving $i \to j$, we recover a valid path in $G$.
\end{proof}

\subsection{Proof of \cref{prop:Non-increasingMakespan}}
\label{appsec:prof_Non-increasingMakespan}

\begin{proof}
Let $\mathcal{P}_G(v_I, v_O)$ denote the set of all directed execution paths from the entry node $v_I$ to the exit node $v_O$ in graph $G$. It suffices to show that for every path $P' \in \mathcal{P}_{G'}(v_I,v_O)$ in the edited graph, there exists a corresponding path $P \in \mathcal{P}_{G}(v_I,v_O)$ in the original graph such that
$$
\sum_{v \in P'} t_v \;\le\; \sum_{v \in P} t_v .
$$

\textbf{(i) Node Removal.}
Suppose a non-terminal node $k\in\mathcal{V}\setminus\{v_I,v_O\}$ is removed.
For any path $P' \in \mathcal{P}_{G'}(v_I, v_O)$:
\begin{itemize}[leftmargin=*]
    \item If $P'$ uses only original edges from $G$, then the same path exists in $G$, and the path duration is identical.
    \item If $P'$ uses a shortcut edge $(u,v)$ introduced by removing $k$, then $G$ contains the segment $u \to k \to v$. Replacing every such shortcut in $P'$ by its original two-edge segment yields a valid path $P$ in $G$, with
    $$
    \sum_{v \in P'} t_v
    \le \sum_{v \in P} t_v ,
    $$
    since $t_k > 0$.
\end{itemize}

\textbf{(ii) Parallel Rewiring.}
Suppose the edge $(i,j)$ is rewired.
For any path $P' \in \mathcal{P}_{G'}(v_I, v_O)$:
\begin{itemize}[leftmargin=*]
    \item If $P'$ uses only original edges from $G$, then the same path exists in $G$.
    \item Otherwise, every edge newly introduced by the rewiring is either of the form $x\to j$ where $x\to i$ existed in $G$, or of the form $i\to y$ where $j\to y$ existed in $G$. We can replace each new edge $x\to j$ by the segment $x\to i\to j$, and each new edge $i\to y$ by the segment $i\to j\to y$. This recovers a valid path $P$ in $G$ and only inserts nodes with positive latency. Hence
    $
    \sum_{v \in P'} t_v \le \sum_{v \in P} t_v .
    $
\end{itemize}

Thus, every execution path in $G'$ is no longer than a corresponding path in $G$. Taking the maximum over paths in $G'$ yields
\begin{align*}
T(G') 
= \max_{P' \in \mathcal{P}_{G'}(v_I, v_O)} \sum_{v \in P'} t_v
\le\max_{P \in \mathcal{P}_G(v_I, v_O)} \sum_{v \in P} t_v
= T(G).
\end{align*}
\end{proof}

\subsection{Proof of \cref{prop:completeness}}
\label{appsec:prof_completeness}

\begin{proof}
The transformation from an arbitrary chain to the target graph $G^\ast$ is achieved via a three-stage procedure. We prove the validity of each stage sequentially.

\vspace{1em}
\noindent\textbf{Stage 1: Topological Alignment (via Permutation).} \\
Let $\sigma:\mathcal{V}\to\mathcal{V}$ be a permutation such that applying the operator $\Pi_\sigma$ to the initial chain yields a sequence where, for every directed edge $(u,v)$ in $G^\ast$, node $u$ precedes $v$. This condition ensures that the linear ordering defined by the permuted chain is consistent with the \textbf{topological sort} of $G^\ast$~\cite{toposort}.

\vspace{1em}
\noindent\textbf{Stage 2: Node Alignment (via Removal).} \\
If the initial chain contains nodes not required by $G^\ast$, we apply \emph{Node Removal} to eliminate all extraneous nodes.
By reachability preservation, removing nodes does not alter the relative reachability among the remaining nodes.
After this stage, the graph is a chain defined exactly on the target node set $\mathcal{V}$, with an ordering consistent with $G^\ast$.
Therefore, the permuted chain after the node permutation is the topological sort of $G^\ast$.
Without loss of generality, we assign node indices according to this ordered chain, starting from $v_1$ to $v_n$.

\vspace{1em}
\noindent\textbf{Stage 3: Exact Edge Reconstruction (via Rewiring)} \\
With the nodes correctly aligned and ordered, we employ a \textit{Reverse Sweeping} strategy to recover the exact edge structure. We first define the construction, then prove its correctness.

\paragraph{I. The Construction.}
We execute the following iterative procedure to refine the graph structure:
\begin{description}
    \item[\textbf{For}] each node $v_i$ iterating from $v_{n-1}$ down to $v_1$ (reverse from end to start):
    \begin{itemize}
        \item Let $\mathcal{N}_{out}'(v_i)$ denote the current set of immediate successors of $v_i$.
        \item \textbf{While} there exists a neighbor $v_k \in \mathcal{N}_{out}'(v_i)$ such that $(v_i, v_k)$ not in $G^\ast$:
        \begin{itemize}
            \item \textbf{Do:} Apply \textit{Parallel Rewiring} to the edge $(v_i, v_k)$, which updates: 
            $$ \mathcal{N}_{out}'(v_i) \leftarrow (\mathcal{N}_{out}'(v_i) \setminus \{v_k\}) \cup \mathcal{N}_{out}'(v_k) $$
        \end{itemize}
        \item \textbf{End While} (All remaining immediate successors are now valid target neighbors).
    \end{itemize}
    \item[\textbf{End For}]
\end{description}

\paragraph{II. Correctness Verification.}
We prove that this process yields $G' \cong G^\ast$ using structural induction on the node index $i$ (descending from $n$ to $1$).

\begin{description}[leftmargin=*]
    \item[Inductive Hypothesis:] Assume that for all nodes $v_j$ in the downstream of $v_i$ ($j > i$), the outgoing edges are already identical to the target graph: $\mathcal{N}_{out}'(v_j) = \mathcal{N}_{out}^\ast(v_j)$. (Base case $v_n$ is trivial).

    \item[Inductive Step:] Consider node $v_i$. We verify the final state of $\mathcal{N}_{out}'(v_i)$ when the rewiring loop for $v_i$ terminates.
    \begin{enumerate}[leftmargin=*]
        \item \textbf{No False Positives:} The loop condition explicitly mandates the removal of any edge $(v_i, v_k)$ that does not exist in $G^\ast$. Thus, $\mathcal{N}_{out}'(v_i) \subseteq \mathcal{N}_{out}^\ast(v_i)$.
        
        \item \textbf{No False Negatives:} Consider any valid target neighbor $v_m \in \mathcal{N}_{out}^\ast(v_i)$. For the sake of contradiction, suppose that $v_m$ is \textbf{absent} from the current immediate successor set $\mathcal{N}_{out}'(v_i)$.
        \begin{itemize}[leftmargin=*]
            \item Since no Parallel Rewiring operation has been applied to $(v_i, v_m)$, the reachability is strictly preserved (\cref{prop:reachability}), the absence of a direct link implies that $v_m$ must be reachable via some intermediate neighbor $v_k \in \mathcal{N}_{out}'(v_i)$.
            \item Because the loop has terminated, $v_k$ is confirmed to be a valid target neighbor (i.e., $(v_i, v_k)$ exists in $G^\ast$).
            \item Furthermore, since $k > i$, the inductive hypothesis guarantees that the path segment $v_k \to \dots \to v_m$ exists in $G^\ast$.
            \item Consequently, $G^\ast$ contains both the direct edge $(v_i, v_m)$ and the indirect path $v_i \to v_k \to \dots \to v_m$. This structure violates the \textbf{transitive reduction} property of $G^\ast$.
            \item Thus, the assumption is false. No such blocking node $v_k$ can exist, and $v_m$ must be present in $\mathcal{N}_{out}'(v_i)$.
        \end{itemize}
    \end{enumerate}
\end{description}

Thus, $\mathcal{N}_{out}'(v_i) = \mathcal{N}_{out}^\ast(v_i)$. By induction, after the loop terminates at $v_1$, the graph $G'$ is exactly isomorphic to $G^\ast$.
\end{proof}

\section{Derivation of the Parallel Rewiring Score}
\label{app:rewire_derivation}

Let $\mathbf{A}\in\{0,1\}^{n\times n}$ denote the adjacency matrix of the workflow graph, and let
$\mathbf{\Gamma}=\frac{\partial S_\theta}{\partial \mathbf{A}}\in\mathbb{R}^{n\times n}$ be the gradient of the Graph Evaluator score $S_\theta$ with respect to $\mathbf{A}$.
For a graph edit that induces a perturbation $\Delta \mathbf{A}$, a first--order Taylor expansion yields
\begin{align}
\Delta S \;\approx\; \langle \mathbf{\Gamma}, \Delta \mathbf{A}\rangle
\;=\; \sum_{u,v} \mathbf{\Gamma}_{uv}\,\Delta \mathbf{A}_{uv}.
\label{eq:first_order}
\end{align}
Rewiring an edge $(i,j)$ removes the original edge $i\!\to\! j$ and merges the outgoing neighbors of $j$
into $i$, as well as the incoming neighbors of $i$ into $j$.
Accordingly, the perturbation $\Delta \mathbf{A}$ consists of three parts:
\begin{itemize}
    \item Removal of the edge $(i,j)$, giving $\Delta \mathbf{A}_{ij}=-\mathbf{A}_{ij}$;
    \item Addition of edges $(i,k)$ for all $k$ such that $\mathbf{A}_{jk}=1$ and $\mathbf{A}_{ik}=0$;
    \item Addition of edges $(k,j)$ for all $k$ such that $\mathbf{A}_{ki}=1$ and $\mathbf{A}_{kj}=0$.
\end{itemize}

Substituting these changes into \eqref{eq:first_order} gives
\begin{align}
\Delta S(i,j)
\;\approx\;
&-\mathbf{\Gamma}_{ij}\mathbf{A}_{ij} \notag\\
&+\sum_{k}\mathbf{\Gamma}_{ik}\,\mathbf{A}_{jk}(1-\mathbf{A}_{ik}) \notag\\
&+\sum_{k}\mathbf{\Gamma}_{kj}\,\mathbf{A}_{ki}(1-\mathbf{A}_{kj}),
\label{eq:rewire_expanded}
\end{align}
which is the stated first--order estimate of the score change induced by rewiring $(i,j)$.

For compactness, define
$\mathbf{H}=\mathbf{\Gamma}\odot\mathbf{A}$ and
$\tilde{\mathbf{H}}=\mathbf{\Gamma}\odot(1-\mathbf{A})$.
The three terms in \eqref{eq:rewire_expanded} can then be written in matrix form, yielding the
parallel--rewiring score matrix
\begin{align}
\mathbf{R}^{\mathrm{rwr}}
\;=\;
-\,\mathbf{H}
\;+\;
\tilde{\mathbf{H}}\,\mathbf{A}^{\top}
\;+\;
\mathbf{A}^{\top}\tilde{\mathbf{H}},
\end{align}
where each entry $\mathbf{R}^{\mathrm{rwr}}_{i,j}$ equals
$\Delta S(i,j)$ and provides the first--order estimated score change
for rewiring edge $(i,j)$.

\section{Details in Experiment Settings}
\label{appsec:exp_details}

\noindent\textbf{Agent Settings.}
We define specific agent pools for each domain to support specialized reasoning.

For the \textbf{Code Generation} domain (MBPP, HumanEval), the available agents include: \texttt{AlgoPlanner}, \texttt{CodeGenerator}, \texttt{CodeAuditor}, \texttt{ConceptualAlgorithmist}, \texttt{LegacyPythonCoder}, \texttt{ComplianceChecker}, \texttt{Generalists} and \texttt{STEM\_Expert}. For the \textbf{Mathematical Reasoning} domain (GSM8K, MATH500, SVAMP, AQuA), the candidate agents are: \texttt{MathReasoner}, \texttt{Calculator}, \texttt{STEM\_Expert}, \texttt{ProblemSimplifier}, \texttt{HeuristicMathSolver}, \texttt{Generalists}, \texttt{ConsensusBuilder}, \texttt{ExecutiveArbitrator}, \texttt{ContrarianThinker}, \texttt{Verifier}, and \texttt{LateralThinkingAnalyst}. For the \textbf{General Knowledge} domain (MMLU), the agent pool consists of: \texttt{LateralThinkingAnalyst}, \texttt{SocialScientist}, \texttt{Generalists}, \texttt{STEM\_Expert}, \texttt{MathReasoner}, \texttt{TheoreticalScientist}, \texttt{CriticalTheorist}, \texttt{HeuristicMathSolver}, \texttt{ContrarianThinker}, and \texttt{ConsensusBuilder}.

\noindent\textbf{Serving Environment.}
For wall-clock makespan measurements, each run uses a freshly launched and warmed-up SGLang server on an NVIDIA RTX PRO 6000 Blackwell Max-Q Workstation Edition GPU (96GB memory), with exclusive GPU allocation during measurement.

\noindent\textbf{Graph Dataset Construction.}
To train the Graph Evaluator and the Graph Generator for each task, we construct benchmark-specific graph datasets. For each benchmark, we generate a fixed pool of $N=2048$ random graphs using the agent candidates from the corresponding task domain.
To obtain the ground-truth performance labels (scores) for each graph $G$, we evaluate $G$ on a subset of task samples from that benchmark's training set. Specifically, we use $k=50$ samples for mathematical reasoning datasets (GSM8K, MATH500, SVAMP, AQuA) and $k=30$ samples for code generation (HumanEval, MBPP) and general knowledge (MMLU) datasets. Since HumanEval and MATH500 do not provide official training splits, we use equivalent training proxies: for HumanEval, we build HumanEval-compatible training samples from the MBPP train split; for MATH500, we sample from the MATH train split (levels 4--5) while excluding instances that overlap with the MATH500 test set. The average accuracy across these $k$ samples serves as the target label for training. We also use the same samples for the training-based baselines. 

\noindent\textbf{LAWA Parameters.}
For the optimization phases, we use fixed beam size of $B=32$.
The maximum number of optimization steps (iterations) is set to $K=11$ for the generation experiments and $K=4$ for the plugin-experiments due to reduced workflow depth.

Our implementation and scripts for reproducing the experiments are available at \url{https://github.com/GraphmindDartmouth/LAWA}.

\section{Case Study: Optimization Trajectory on MBPP.}
\label{appsec:case_study}

To illustrate the step-by-step optimization behavior of LAWA, we start from a chain-structured workflow on MBPP and apply LAWA with beam size $B=1$, so that it will only modify the single best candidate graph of the last iteration.
\Cref{fig:pruning_process} illustrates the optimization trajectory.
The original workflow achieves an accuracy of 63.9\% and exhibits a fully sequential dependency pattern, resulting in a long critical path.

In the first step, LAWA applies parallel rewiring between \texttt{Generalists} and \texttt{AlgoPlanner}, exposing latent parallelism without altering the agent set.
This change shortens the critical path and yields a slight accuracy improvement to 64.1\%.

In the second step, LAWA removes the second \texttt{ComplianceChecker} node.
This node contributes redundant computation along the critical path, and its removal further reduces the makespan while increasing accuracy to 65.6\%.

In the third step, LAWA performs parallel rewiring between \texttt{ComplianceChecker} and \texttt{CodeAuditor}. While this modification slightly reduces accuracy to 65.2\%, it eliminates a remaining sequential dependency and reduces the runtime contribution of \texttt{ComplianceChecker}.

Among these steps, LAWA incrementally compresses the critical path while maintaining stable task performance.

\begin{figure*}[ht]
    \centering
    \includegraphics[width=1\linewidth]{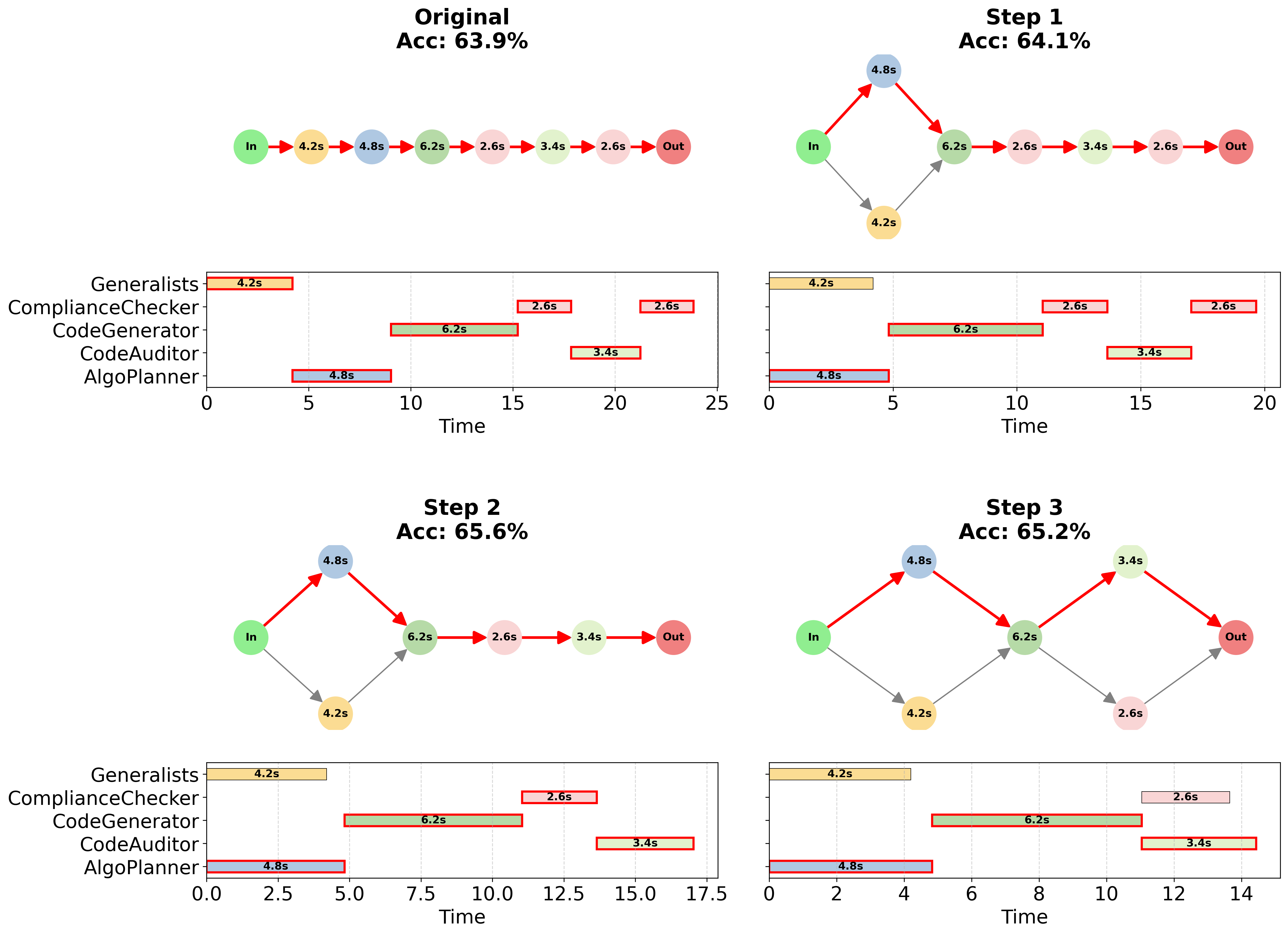}
    \caption{\textbf{Example of iterative workflow optimization on MBPP.}
    We visualize the original chain-structured workflow (top-left) and the resulting workflows after 1, 2, and 3 optimization steps (top-right, bottom-left, bottom-right, respectively). For each panel, the upper plot shows the workflow graph, where red arrows denote the critical path, and the lower plot shows the corresponding execution Gantt chart with critical-path agent executions highlighted (both using estimated $t_k$). As LAWA applies successive structural modifications, the workflow topology departs from the initial chain, progressively shortening the critical path and reducing makespan.}
    \label{fig:pruning_process}
\end{figure*}

\section{Probabilistic Robustness Analysis under Evaluator Perturbations}
\label{appsec:flip_prob}

We provide a probabilistic justification for the robustness behavior observed in RQ3 under evaluator perturbations. Let $\mathcal{M}$ denote the set of valid candidate graphs generated during optimization, where each $G \in \mathcal{M}$ satisfies the true constraint $S(G)\ge S(G_0)$, and let $\mathcal{U}$ denote the set of invalid candidates with $S(G)<S(G_0)$.
Define the top-tier valid set as
\[
\mathcal{M}^\ast=\{v\in\mathcal{M}\mid T(v)<\min_{u\in\mathcal{U}}T(u)\},
\]
which contains valid candidates whose makespan is smaller than any invalid candidate. The optimization procedure outputs
\[
G^\ast=\arg\min_{G\in\mathcal{W}}T(G),
\]
where
\[
\mathcal{W}=\{G\in\mathcal{M}\cup\mathcal{U}\mid S_\theta(G)\ge S_\theta(G_0)\}
\]
is the retained set under the Graph Evaluator.
Assume an otherwise accurate evaluator whose keep/reject decision at Algorithm~\ref{alg:lawa_overview} (line 12) is independently perturbed (flipped) with probability $\alpha$.
Then two events must simultaneously happen for an invalid final output: (i) at least one invalid candidate is falsely kept, and (ii) all top-tier valid candidates are falsely rejected.
Conditioned on $(\mathcal{U},\mathcal{M}^\ast)$, we have
\begin{align}
\Pr(G^\ast\in\mathcal{U}\mid\mathcal{U},\mathcal{M}^\ast)
&\le \Pr(\exists u\in\mathcal{U},\,u\in\mathcal{W})
\cdot
\Pr(\forall v\in\mathcal{M}^\ast,\,v\notin\mathcal{W}) \notag\\
&=\Big(1-(1-\alpha)^{|\mathcal{U}|}\Big)\alpha^{|\mathcal{M}^\ast|} \notag\\
&\le \Big(1-(1-\alpha)^{\bar N}\Big)\alpha^{|\mathcal{M}^\ast|},
\end{align}
where $\bar N$ is an upper bound on the number of candidates.

Let $p_0=\Pr(|\mathcal{M}^\ast|=0)$.
Marginalizing over candidate sets yields
\begin{align}
\Pr(G^\ast\in\mathcal{U})
&=\sum_{\mathcal{U},\mathcal{M}^\ast}
\Pr(G^\ast\in\mathcal{U}\mid\mathcal{U},\mathcal{M}^\ast)\Pr(\mathcal{U},\mathcal{M}^\ast) \notag\\
&\le
\Big(1-(1-\alpha)^{\bar N}\Big)\sum_{\mathcal{M}^\ast}\alpha^{|\mathcal{M}^\ast|}\Pr(\mathcal{M}^\ast) \notag\\
&=
\Big(1-(1-\alpha)^{\bar N}\Big)
\left(\sum_{k=0}^{\infty}\Pr(|\mathcal{M}^\ast|=k)\alpha^k\right) \notag\\
&\le
\Big(1-(1-\alpha)^{\bar N}\Big)\big(p_0+(1-p_0)\alpha\big).
\end{align}

Therefore, the probability of an invalid final output is explicitly determined by $\alpha$.
As $\alpha\to 0$, the upper bound also approaches $0$, consistent with RQ3, where performance remains stable under mild perturbations.

\begin{table}[t]
\centering
\caption{\small \textbf{Behavior of LAWA under limited executors.} Here $m$ is the number of available executors; runtime improvement is the relative reduction versus the initial workflow; operation counts are total rewiring/node-removal edits aggregated over all generated instances (7 benchmarks $\times$ 3 runs).}
\label{tab:appendix_limited_executors}
\resizebox{0.6\columnwidth}{!}{%
\begin{tabular}{ccc}
\toprule
\rowcolor{rowgray}
\textbf{$m$} & \textbf{Runtime Improvement (\%)} & \textbf{Operation Counts (rewire/remove)} \\
\midrule
1 & 43.67 & 19/64 \\
\rowcolor{rowgray}
2 & 65.38 & 83/61 \\
3 & 67.74 & 91/59 \\
\rowcolor{rowgray}
4 & 67.89 & 92/59 \\
\bottomrule
\end{tabular}}
\end{table}

\begin{table}[t]
\centering
\caption{\small \textbf{Accuracy robustness under rewiring-score approximation on three benchmarks.} We compare LAWA with and without first-order approximation when filtering rewiring edits. Values are accuracy (\%, mean$\stdsub{\text{std}}$) over 3 runs. Results show that the approximation does not introduce a noticeable drop in accuracy.}
\label{tab:appendix_approximation}
\resizebox{0.5\columnwidth}{!}{%
\begin{tabular}{lccc}
\toprule
\rowcolor{rowgray}
\textbf{Setting} & \textbf{MATH500} & \textbf{HumanEval} & \textbf{AQuA} \\
\midrule
w/o approximation & $67.00\stdsub{0.71}$ & $93.29\stdsub{0.00}$ & $83.86\stdsub{0.56}$ \\
\rowcolor{rowgray}
w/ approximation & $67.00\stdsub{0.86}$ & $92.89\stdsub{1.04}$ & $83.46\stdsub{0.85}$ \\
\bottomrule
\end{tabular}}
\end{table}

\section{More Experiment Results}
\label{appsec:add_ablations}

\subsection{Token Consumption}
\label{appsec:rq1_token}

To examine whether parallel rewiring increases context-broadcasting overhead, we report total token consumption (input + output) of the generated workflows for all methods under the same final setting.
As shown in Table~\ref{tab:appendix_token}, LAWA consistently remains token-efficient. The token consumption is substantially lower than heavily parallel baselines such as DyLAN.
This is because total token usage is largely determined by the number of agent invocations, and LAWA removes redundant nodes during optimization.

\begin{table*}[t]
\centering
\caption{\small \textbf{Total token consumption per query (Tokens$\downarrow$) on seven benchmarks.} We report total token consumption (input + output) for all methods. Values are mean$\stdsub{\text{std}}$ over three runs, and the best value in each benchmark is bolded.}
\label{tab:appendix_token}
\resizebox{0.99\textwidth}{!}{%
\begin{tabular}{lccccccc}
\toprule
\rowcolor{rowgray}
\textbf{Method}
& \textbf{MBPP}
& \textbf{HumanEval}
& \textbf{AQuA}
& \textbf{GSM8K}
& \textbf{MATH500}
& \textbf{SVAMP}
& \textbf{MMLU} \\
\midrule
Rand. Graph
& $22768\stdsub{1949}$
& $34403\stdsub{4326}$
& $64483\stdsub{2350}$
& $40409\stdsub{8685}$
& $57252\stdsub{8018}$
& $40825\stdsub{1684}$
& $48050\stdsub{6204}$ \\

\rowcolor{rowgray}
Rand. Chain
& $40289\stdsub{4229}$
& $49315\stdsub{3638}$
& $99064\stdsub{11753}$
& $83317\stdsub{446}$
& $103547\stdsub{5895}$
& $71156\stdsub{1235}$
& $73758\stdsub{821}$ \\

3-Round Disc.
& $219103\stdsub{7310}$
& $243745\stdsub{3320}$
& $595536\stdsub{12769}$
& $500286\stdsub{2419}$
& $646475\stdsub{21625}$
& $445061\stdsub{7785}$
& $470002\stdsub{1510}$ \\

\midrule
\rowcolor{rowgray}
DyLAN
& $55388\stdsub{3921}$
& $72015\stdsub{5500}$
& $85396\stdsub{9330}$
& $41915\stdsub{20638}$
& $98495\stdsub{3777}$
& $62078\stdsub{2553}$
& $81276\stdsub{1301}$ \\

G-Designer
& $37128\stdsub{2550}$
& $44954\stdsub{1157}$
& $88371\stdsub{4124}$
& $80719\stdsub{3347}$
& $99698\stdsub{4556}$
& $70241\stdsub{5089}$
& $67340\stdsub{2960}$ \\

\rowcolor{rowgray}
AgentDropout
& $19870\stdsub{9307}$
& $27288\stdsub{10278}$
& $54616\stdsub{5328}$
& $29668\stdsub{7334}$
& $46292\stdsub{7966}$
& $41654\stdsub{4011}$
& $46553\stdsub{10920}$ \\

AgentPrune
& $34834\stdsub{5788}$
& $46360\stdsub{222}$
& $88365\stdsub{3343}$
& $75535\stdsub{1079}$
& $99848\stdsub{3995}$
& $59406\stdsub{5280}$
& $62790\stdsub{5324}$ \\

\rowcolor{rowgray}
BiRouter
& $11338\stdsub{2789}$
& $16150\stdsub{3886}$
& $14843\stdsub{3639}$
& $\textbf{9344}\stdsub{929}$
& $15100\stdsub{3514}$
& $8448\stdsub{227}$
& $\textbf{14763}\stdsub{3666}$ \\
AFlow
& $11865\stdsub{3037}$
& $16532\stdsub{3825}$
& $35135\stdsub{4172}$
& $23594\stdsub{2120}$
& $35469\stdsub{4828}$
& $22166\stdsub{2675}$
& $24833\stdsub{4578}$ \\
\rowcolor{rowgray}
LAMaS
& $7898\stdsub{150}$
& $11705\stdsub{240}$
& $14047\stdsub{3976}$
& $15841\stdsub{6533}$
& $17849\stdsub{2544}$
& $13766\stdsub{2070}$
& $15012\stdsub{103}$ \\

\midrule
\textbf{LAWA}
& $\textbf{6516}\stdsub{30}$
& $\textbf{11401}\stdsub{4099}$
& $\textbf{10796}\stdsub{4121}$
& $10461\stdsub{1862}$
& $\textbf{12207}\stdsub{3152}$
& $\textbf{7615}\stdsub{2304}$
& $15077\stdsub{2897}$ \\
\bottomrule
\end{tabular}}
\end{table*}

\subsection{Optimization Runtime}
\label{appsec:rq1_overhead}

To better characterize computational overhead of LAWA, we report both (i) wall-clock time to generate one workflow and (ii) Graph Evaluator training time.
Unless otherwise stated, runtime measurements in this section are conducted on a machine with an AMD EPYC 9655 96-Core Processor CPU.
As shown in Table~\ref{tab:appendix_gen_time}, workflow generation typically takes only around 1--2.3 seconds because the search operates only on graph structure without executing real agent calls. Compared with high-latency baseline workflows in Table~\ref{tab:time_results}, this search overhead is small relative to recurrent execution costs.

As shown in Table~\ref{tab:appendix_train_time}, Graph Evaluator training takes 21--95 seconds per benchmark and is performed once per benchmark as an offline preprocessing step. Therefore, when LAWA is used repeatedly to replace high-latency workflows, this one-time overhead can be easily amortized over subsequent executions. For example, using Random Chain as a representative high-latency initial workflow, the one-time evaluator-training and search cost is offset after roughly 2--7 recurrent executions across benchmarks.

\paragraph{Cost of graph-dataset construction.}
On MATH500, constructing the full graph dataset (evaluating 2048 randomly generated workflows on 50 training samples) takes 38.7 minutes of wall-clock time on 8 NVIDIA RTX PRO 6000 GPUs. This is a one-time, fully parallelizable offline stage: unlike on-policy baselines, whose data collection is sequential and repeated during optimization, LAWA's workflow evaluations have no sequential dependency. On the same server, the complete LAWA pipeline (dataset generation, training, and workflow generation) takes 41.2 minutes, whereas AgentDropout requires 129.6 minutes under the same 2048-rollout budget (64 iterations $\times$ batch size 32). The gap arises because evaluating 2048 workflows in one pass fully utilizes the inference server: it costs only about $20\times$ the time of evaluating 32 workflows, despite processing $64\times$ as many workflows. Thus, even after including graph-dataset construction, LAWA takes about one third of the wall-clock time of this on-policy baseline.

\begin{table}[t]
\centering
\caption{\small \textbf{Wall-clock time to generate a single workflow (Seconds$\downarrow$).}}
\label{tab:appendix_gen_time}
\resizebox{0.25\columnwidth}{!}{%
\begin{tabular}{lc}
\toprule
\rowcolor{rowgray}
\textbf{Benchmark} & \textbf{Time (s)} \\
\midrule
MBPP & $2.2736\stdsub{0.2287}$ \\
\rowcolor{rowgray}
HumanEval & $1.1563\stdsub{0.4062}$ \\
AQuA & $1.6917\stdsub{0.7676}$ \\
\rowcolor{rowgray}
GSM8K & $2.2471\stdsub{0.1613}$ \\
MATH500 & $1.7981\stdsub{0.1430}$ \\
\rowcolor{rowgray}
SVAMP & $2.2039\stdsub{0.2768}$ \\
MMLU & $2.1496\stdsub{0.1832}$ \\
\bottomrule
\end{tabular}}
\end{table}

\begin{table}[t]
\centering
\caption{\small \textbf{Wall-clock time to train the Graph Evaluator (Seconds$\downarrow$).}}
\label{tab:appendix_train_time}
\resizebox{0.25\columnwidth}{!}{%
\begin{tabular}{lc}
\toprule
\rowcolor{rowgray}
\textbf{Benchmark} & \textbf{Train Time (s)} \\
\midrule
MBPP & 44.45 \\
\rowcolor{rowgray}
HumanEval & 21.32 \\
AQuA & 63.34 \\
\rowcolor{rowgray}
GSM8K & 34.65 \\
MATH500 & 46.47 \\
\rowcolor{rowgray}
SVAMP & 23.32 \\
MMLU & 95.08 \\
\bottomrule
\end{tabular}}
\end{table}

\subsection{Frontier Backbone Model}
\label{appsec:frontier}

We additionally evaluate the officially hosted frontier model Kimi-K2.6 on MBPP, MATH500, and MMLU, comparing LAWA with DyLAN and AgentDropout. For LAWA, we reuse the original Graph Evaluator, Graph Generator, and estimated $t_k$ values, without retraining the learned modules or recalibrating the latency estimates. As shown in Table~\ref{tab:appendix_kimi}, LAWA's performance advantage is preserved with the stronger backbone: it remains best on MBPP and MMLU, and is on par with DyLAN on MATH500. A stronger backbone makes each individual agent more capable, so a compact workflow can attain high accuracy with fewer agents, which favors LAWA's ability to remove redundant agents. LAWA also remains the fastest method on all three benchmarks.

\begin{table}[t]
\centering
\caption{\small \textbf{Accuracy (\%$\uparrow$) and real wall-clock makespan (s$\downarrow$) with Kimi-K2.6.} Values are mean$\stdsub{\text{std}}$ over three workflow graphs.}
\label{tab:appendix_kimi}
\resizebox{0.9\columnwidth}{!}{%
\begin{tabular}{lcccccc}
\toprule
\rowcolor{rowgray}
& \multicolumn{2}{c}{\textbf{MBPP}} & \multicolumn{2}{c}{\textbf{MATH500}} & \multicolumn{2}{c}{\textbf{MMLU}} \\
\cmidrule(lr){2-3} \cmidrule(lr){4-5} \cmidrule(lr){6-7}
\rowcolor{rowgray}
\textbf{Method} & \textbf{Acc.} & \textbf{Time} & \textbf{Acc.} & \textbf{Time} & \textbf{Acc.} & \textbf{Time} \\
\midrule
DyLAN & $73.41\stdsub{0.22}$ & $40.27\stdsub{5.92}$ & $81.20\stdsub{1.25}$ & $32.01\stdsub{9.53}$ & $80.73\stdsub{1.80}$ & $33.07\stdsub{1.03}$ \\
\rowcolor{rowgray}
AgentDropout & $72.24\stdsub{0.59}$ & $43.66\stdsub{2.37}$ & $79.27\stdsub{2.89}$ & $43.34\stdsub{1.98}$ & $79.82\stdsub{3.61}$ & $52.66\stdsub{13.15}$ \\
\midrule
\textbf{LAWA} & $\textbf{73.80}\stdsub{0.81}$ & $\textbf{14.19}\stdsub{1.21}$ & $\textbf{81.27}\stdsub{1.22}$ & $\textbf{18.43}\stdsub{2.99}$ & $\textbf{81.74}\stdsub{1.77}$ & $\textbf{19.75}\stdsub{2.10}$ \\
\bottomrule
\end{tabular}}
\end{table}

\subsection{Finite-Resource Scheduling}
\label{appsec:finite_resource}
We study how LAWA's editing policy changes when the available parallel resources are constrained. 
During graph editing, we replace the default critical-path estimator (\cref{eq:def_T}) with an HLFET list-scheduling runtime estimator~\cite{hlfet} and vary the number of executors $m\in\{1,2,3,4\}$. Runtime improvement is measured as the relative reduction against the initial workflow, and operation counts are aggregated over all generated instances (7 benchmarks $\times$ 3 runs). 
As reported in Table~\ref{tab:appendix_limited_executors}, the optimization behavior shows a clear transition: under single-executor constraints, LAWA mainly shortens the critical chain through node removal, while with more executors it increasingly prefers rewiring to unlock parallel branches; correspondingly, runtime gains improve quickly at first and then gradually plateau as available parallelism approaches saturation. 
This indicates that, even under limited executors, LAWA can adapt its editing strategy to the available resources.

\subsection{Robustness under First-Order Approximation}
We examine whether the first-order gradient approximation used to filter rewiring operations harms final optimization quality. 
We compare default LAWA (with approximation) against a variant that disables approximation and computes rewiring-score changes exactly during candidate generation, then evaluate final accuracy on MATH500, HumanEval, and AQuA.
As shown in Table~\ref{tab:appendix_approximation}, the performance differences are minor across benchmarks, indicating that the approximation does not introduce a noticeable drop in final accuracy while helping reduce the number of candidate graphs.

\subsection{Effectiveness of the Graph Evaluator}
We conduct an ablation study to examine whether the Graph Evaluator stabilizes LAWA's iterative graph editing process. Starting from the same randomly initialized chain workflows, we compare the default LAWA against a variant without evaluator-based filtering, which keeps candidates only according to makespan. Table~\ref{tab:appendix_eval_effectiveness} reports the downstream-performance change across optimization steps on two representative benchmarks. With the Graph Evaluator, LAWA yields positive mean performance changes across optimization steps, whereas removing the evaluator leads to less stable performance and can result in degradation. These results suggest that the Graph Evaluator helps retain downstream task quality while the structural optimizer reduces latency.

\begin{table*}[t]
\centering
\caption{\small \textbf{Performance change across optimization steps with and without the Graph Evaluator.} Values are downstream-performance changes (\%, mean$\stdsub{\text{std}}$) relative to the initial workflow, computed over eight different random initial graphs.}
\label{tab:appendix_eval_effectiveness}
\resizebox{\textwidth}{!}{%
\begin{tabular}{llccccc}
\toprule
\rowcolor{rowgray}
\textbf{Dataset} & \textbf{Method} & \textbf{Step 1} & \textbf{Step 2} & \textbf{Step 3} & \textbf{Step 4} & \textbf{Step 5} \\
\midrule
HumanEval & \textbf{LAWA (w/ Evaluator)} & $\textbf{+1.68}\stdsub{1.99}$ & $\textbf{+2.21}\stdsub{2.97}$ & $\textbf{+2.29}\stdsub{3.22}$ & $\textbf{+2.44}\stdsub{3.37}$ & $\textbf{+1.75}\stdsub{2.54}$ \\
\rowcolor{rowgray}
HumanEval & w/o Evaluator & $-0.30\stdsub{2.14}$ & $-0.15\stdsub{2.82}$ & $-0.07\stdsub{2.26}$ & $-0.53\stdsub{2.12}$ & $-2.74\stdsub{1.72}$ \\
MATH500 & \textbf{LAWA (w/ Evaluator)} & $\textbf{+0.95}\stdsub{0.60}$ & $\textbf{+1.65}\stdsub{1.09}$ & $\textbf{+1.25}\stdsub{1.17}$ & $\textbf{+1.65}\stdsub{1.14}$ & $\textbf{+1.35}\stdsub{1.23}$ \\
\rowcolor{rowgray}
MATH500 & w/o Evaluator & $-0.20\stdsub{0.63}$ & $-0.35\stdsub{1.19}$ & $-0.45\stdsub{1.19}$ & $-0.63\stdsub{1.34}$ & $-1.72\stdsub{1.51}$ \\
\bottomrule
\end{tabular}}
\end{table*}

\subsection{Graph Evaluator Architecture Sensitivity}
We evaluate how sensitive the Graph Evaluator is to GNN architecture choices and justify the default backbone used in LAWA. 
On HumanEval, we run a full ablation over GNN type (\texttt{gcn}, \texttt{gat}, \texttt{transformer}), depth (1--4 layers), and hidden size (256/512), while keeping all other training settings fixed and reporting train/test accuracy over 3 seeds. 
Table~\ref{tab:appendix_gnn_ablation} indicates that GCN performs most consistently across configurations, while the performance of GAT and Transformer varies more with depth and hidden size. The 3-layer Transformer with hidden dimension 512 achieves both train and test accuracy above 95\%, and Graph Transformers are generally more suited to capture the long-range interactions that may arise from multi-hop dependencies in workflow graphs~\cite{ying2021do}. Therefore, we use a 3-layer Transformer with hidden dimension 512 as the default Graph Evaluator.

\section{Limitations and Future Works}
\label{appsec:limitations}

This work considers a static agent pool and a fixed set of downstream tasks, a standard setting that is widely adopted in prior work. In practice, agent availability and task distributions may evolve over time. Extending the framework to more dynamic scenarios is a promising direction for future work. 

In dynamic environments, the set of available agents may change over time, leading to changes in the feasible workflows and the associated makespan. Extending the framework to accommodate a dynamic agent pool would require the workflow to adapt to such changes while maintaining efficiency. Similarly, downstream tasks may evolve over time, potentially altering the desirable workflow patterns. Developing mechanisms that allow the evaluator and workflow to adapt to such shifts in a data-efficient and incremental manner, rather than being rebuilt from scratch, is another important direction for future work.

%
%

\begin{table*}[t]
\centering
\caption{\small \textbf{GNN architecture ablation on HumanEval for Graph Evaluator.} We vary GNN type (\texttt{gcn}, \texttt{gat}, \texttt{transformer}), number of layers (\{1,2,3,4\}), and hidden dimension (\{256,512\}), while keeping other training settings fixed. Values are train/test accuracy (\%, mean$\stdsub{\text{std}}$) over 3 seeds. Rows where both train and test mean exceed 95 are bolded.}
\label{tab:appendix_gnn_ablation}
\vspace{5mm}
\resizebox{0.7\textwidth}{!}{%
\begin{tabular}{ccccc}
\toprule
\rowcolor{rowgray}
\textbf{Hidden Dim} & \textbf{GNN Type} & \textbf{\#Layers} & \textbf{Train Acc (\%)} & \textbf{Test Acc (\%)} \\
\midrule
256 & gcn & 1 & $94.68\stdsub{0.68}$ & $94.79\stdsub{3.35}$ \\
\rowcolor{rowgray}
256 & gcn & 2 & $94.63\stdsub{0.49}$ & $95.17\stdsub{4.28}$ \\
256 & gcn & 3 & $\textbf{95.05}\stdsub{0.44}$ & $\textbf{95.14}\stdsub{2.58}$ \\
\rowcolor{rowgray}
256 & gcn & 4 & $94.93\stdsub{0.20}$ & $95.77\stdsub{3.30}$ \\
256 & gat & 1 & $94.97\stdsub{0.73}$ & $94.47\stdsub{3.27}$ \\
\rowcolor{rowgray}
256 & gat & 2 & $95.38\stdsub{1.27}$ & $94.00\stdsub{3.74}$ \\
256 & gat & 3 & $94.22\stdsub{1.63}$ & $93.57\stdsub{3.74}$ \\
\rowcolor{rowgray}
256 & gat & 4 & $95.00\stdsub{0.77}$ & $92.44\stdsub{3.87}$ \\
256 & transformer & 1 & $94.18\stdsub{0.58}$ & $93.47\stdsub{4.36}$ \\
\rowcolor{rowgray}
256 & transformer & 2 & $95.35\stdsub{1.47}$ & $94.43\stdsub{4.35}$ \\
256 & transformer & 3 & $94.75\stdsub{1.70}$ & $94.03\stdsub{4.14}$ \\
\rowcolor{rowgray}
256 & transformer & 4 & $93.60\stdsub{1.58}$ & $94.00\stdsub{2.90}$ \\
512 & gcn & 1 & $94.92\stdsub{1.39}$ & $95.11\stdsub{3.80}$ \\
\rowcolor{rowgray}
512 & gcn & 2 & $\textbf{95.43}\stdsub{0.68}$ & $\textbf{95.45}\stdsub{2.94}$ \\
512 & gcn & 3 & $95.07\stdsub{0.80}$ & $94.17\stdsub{4.03}$ \\
\rowcolor{rowgray}
512 & gcn & 4 & $94.85\stdsub{0.58}$ & $95.28\stdsub{1.95}$ \\
512 & gat & 1 & $95.00\stdsub{1.54}$ & $94.90\stdsub{2.21}$ \\
\rowcolor{rowgray}
512 & gat & 2 & $95.13\stdsub{1.16}$ & $93.40\stdsub{4.35}$ \\
512 & gat & 3 & $95.13\stdsub{1.09}$ & $93.82\stdsub{2.16}$ \\
\rowcolor{rowgray}
512 & gat & 4 & $95.70\stdsub{0.82}$ & $93.25\stdsub{3.89}$ \\
512 & transformer & 1 & $93.12\stdsub{1.45}$ & $91.67\stdsub{2.64}$ \\
\rowcolor{rowgray}
512 & transformer & 2 & $93.23\stdsub{0.35}$ & $89.93\stdsub{6.86}$ \\
512 & transformer & 3 & $\textbf{95.97}\stdsub{1.29}$ & $\textbf{95.11}\stdsub{4.39}$ \\
\rowcolor{rowgray}
512 & transformer & 4 & $95.28\stdsub{3.68}$ & $94.68\stdsub{4.06}$ \\
\bottomrule
\end{tabular}}
\end{table*}

\clearpage



\end{document}